\def\BibTeX{{\rm B\kern-.05em{\sc i\kern-.025em b}\kern-.08em
    T\kern-.1667em\lower.7ex\hbox{E}\kern-.125emX}}
\newcommand{\CH}{{\ensuremath{\mathcal{C}}}\xspace}
\newcommand{\VRF}{{\ensuremath{\mathcal{A}}}\xspace}
\newcommand{\Cp}{{\ensuremath{\mathit{WP_{\CH}}}}\xspace}
\newcommand{\Cw}{{\ensuremath{\mathit{MP_\mathcal{W}}}}\xspace}
\newcommand{\Ct}{{\ensuremath{\mathit{WC_\mathcal{T}}}}\xspace}
\newcommand{\Cv}{{\ensuremath{\mathit{MC_{\VRF}}}}\xspace}
\newcommand{\SM}{{\ensuremath{\mathit{MC_\mathcal{Y}}}}\xspace}
\newcommand{\SW}{{\ensuremath{\mathit{WB_\mathcal{Y}}}}\xspace}
\newcommand{\Bc}{{\ensuremath{\mathit{MB_\mathcal{R}}}}\xspace}
\newcommand{\majoritymodel}{{\ensuremath{\cal{R}_{\rm m}}}\xspace}
\newcommand{\am}{\alpha_m\xspace}
\newcommand{\aw}{\alpha_w\xspace}
\newcommand{\typeA}{Linear\xspace}
\newcommand{\typeB}{Exponential\xspace}
\newcommand{\typeC}{Legacy Boinc\xspace}
\newcommand{\typeD}{Boinc\xspace}
\newcommand{\junk}[1]{}
\newtheorem{theorem}{Theorem}
\newtheorem{lemma}[theorem]{Lemma}
\begin{document}

\title{Coping with Unreliable Workers in Internet-based Computing: An Evaluation of Reputation Mechanisms\thanks{This work was supported in part by the Regional Government of Madrid (CM) grant Cloud4BigData (S2013/ICE-2894) cofunded by FSE \& FEDER, 
the Spanish Ministry of Economy and Competitiveness grant FIS2015-64349-P, the European Commission FET-Open Project IBSEN (RIA~662725), the Fundaci\'on BBVA grant DUNDIG
and the NSF of China grant 61520106005.}
}

\author{\uppercase{Evgenia Christoforou}\\
Politecnico di Torino, Torino, Italy (e-mail: evgenia.christoforou@polito.it)
\and
\uppercase{ANTONIO FERN\'ANDEZ ANTA}\\
IMDEA Networks Institute, Madrid, Spain
\and
\uppercase{CHRYSSIS GEORGIOU}\\
University of Cyprus, Nicosia, Cyprus
\and
\uppercase{MIGUEL A. MOSTEIRO}\\
Pace University, New York, USA
\and
\uppercase{ANGEL S\'ANCHEZ}\\
Universidad Carlos III de Madrid, Leganes, Madrid, Spain and \\
BIFI Institute, Zaragoza, Spain}




\maketitle

\begin{abstract}
We present reputation-based mechanisms for building reliable task computing systems over the Internet. The most characteristic examples of such systems are the volunteer computing (e.g., SETI@home using the BOINC platform) and the crowdsourcing platforms (e.g., Amazon\rq{}s Mechanical Turk). In both examples end users are offering over the Internet their computing power or their human intelligence to solve tasks either voluntarily or under payment. While the main advantage of these systems is the inexpensive computational power provided; the main drawback is the untrustworthy nature of the end users. Generally, this type of systems are modelled under the ``master-worker'' setting. A ``master'' has a set of tasks to compute and instead of computing them locally she sends these tasks to available ``workers'' that compute and report back the task results. We categorize these workers in three generic types: {\em altruistic}, 
{\em malicious} and {\em rational}. Altruistic workers that always return the correct result, malicious workers that always return an incorrect result, and rational workers that decide to reply or not truthfully depending on what increases their benefit.    
We design a reinforcement learning mechanism to induce a correct behavior to rational workers, while the mechanism is complemented by four reputation schemes that cope with malice. The goal of the mechanism is to reach a state of {\em eventual correctness},
that is, a stable state of the system in which the master always obtains the correct task results.
Analysis of the system gives provable guarantees under which truthful behavior can be ensured by modelling the system as a Markov chain. 
Finally, we observe the behavior of the mechanism through simulations that use realistic system parameters values. 
Simulations not only agree with the analysis but also reveal interesting trade-offs between various metrics and parameters. The correlation among cost and convergence time to a truthful behavior is shown and the four reputation schemes are assessed against the tolerance to cheaters.
\end{abstract}




\section{Introduction}
\label{sec:intro}


The Internet is turning into a massive source of inexpensive computational power. Every entity connected over the Internet is a potential source not only of machine computational power but also of human intelligence.
For this reason numerous platforms~\cite{boinc,turk,galaxyzoo,bitcoinmining} have been designed to harvest this computational power. What makes this type of computational power so appealing besides the fact that is inexpensive is the easy access that one can have to it. Unfortunately, for the same reason computations carried out over the Internet can not be trusted. Not only end users can hide behind their anonymity but also 
who ever is requesting the computation may be lacking tools to verify the validity of the results. 

Usually we refer to this source of computational power as Internet-based task computing; since we are dealing with computations carried out over the Internet in the form of small assignments for the end users. We will refer to these small assignments as tasks. 
As we mentioned before, this type of computation is inexpensive compared to supercomputing. The reason is because the end users are not devoted to the computation and they usually accept to perform these small tasks for a low payment or even volunteer for them. 
The most representative examples of Internet-based task computing are volunteer computing and crowdsourcing. 

Volunteer computing has been greatly embraced by the scientific community that is always in need of cheap supercomputing power. End users engaged by the mission of the project are willing to contribute their machine's idle computational time. The majority of these volunteering projects are using the BOINC platform~\cite{boinc}, with SETI@home~\cite{SETI} being one of the most characteristic examples. IBM's initiative through the World Community Grid~\cite{WCG} is able to bring together organizations dealing with health, poverty and sustainability with volunteers all over the Internet that want to put in a good use their idle processing power. 
Besides joining a project to support a scientific goal, a worker might also be attracted by the prestige of having her contribution announced~\cite{volunteer}. This last reason for joining the computation together with the fact that a user might actually want to harm the project are enough to jeopardize the reliability of volunteer computing. Several studies~\cite{boinc,volunteer,Heienetal09,Kondoetal2007,BOINC-WCG} have found evidence that reliability of data is not an a-priory property of volunteer computing and there is a need for establishing it.

Besides users volunteering computational resources, humans themselves connected to the Internet are a source of computational power. 
The word crowdsourcing was introduced by Howe~\cite{howe2006rise} to describe the situation where human intelligence tasks are executed over the Internet by humans that are given monetary, social or other kind of incentives.
A profit-seeking computation platform  has been developed by Amazon, called  Mechanical Turk~\cite{turk} (MTurk). Users sign in to the platform and choose to perform human intelligence tasks (HIT). In return they get a monetary reward. The most common tasks encountered in MTurk are closed class questions (following the categorization in~\cite{eickhoff2013increasing}), meaning that the range of answers is a limited predefined set.
Like volunteer computing, crowdsourcing system can not be consider reliable~\cite{eickhoff2013increasing,hyman2013software}, especially now that participants expect a monetary gain.

Another example of an application that can be considered Internet-based computing is Bitcoin mining~\cite{bitcoinmining}, that has produced a huge interest from the users and the financial industry. In Bitcoin mining workers carry out complex computations to validate transactions based on Bitcoins, a virtual currency. 
The computational paradigm is peer-to-peer, that is, there is no centralized authority. Nevertheless, the whole system can be viewed as the master assigning tasks to workers, or in this case, the miners. 
Given that miners may be deceitful, the system must include measures to prevent or minimize this drawback. However, Bitcoin relies on the complexity of the computation and Bitcoin payments (proof of work) to guarantee trustworthiness~\cite{barber}.   
 
Finally, another example of Internet-based task computing we could consider is virtual citizen science~\cite{2011galaxy,kloetzer2014learning}. We could say that this type of computation is a hybrid among what we call volunteer computing and crowdsourcing. Volunteers over the Internet willing to help scientists accept to participate in task that need human intelligence to be solved. One of the most characteristic project is Galaxy Zoo~\cite{2011galaxy,galaxyzoo}, engaging volunteers to classify galaxies into categories, doing so in many occasions in the form of a ``fun'' game.
It was through the work of Von Ahn~\cite{von2006games}, that pioneered games with a purpose, where we first saw this type of scheme for creating a more pleasant experience for the volunteers.  
As pointed out by Kloetzer et al.~\cite{kloetzer2014learning} volunteers will gradually learn to perform a task through these ``task-game mechanisms'' as they call them. Thus, these type of volunteers actually become reliable over time and given the right incentives.


All of the aforementioned examples in essence follow a master-worker model. A master process sends tasks across the Internet, to available worker processes, that execute and report back the task results. Moreover, it is clear from the nature of Internet-based task computing that workers can not be trusted, for the reasons mentioned before. Thus, in order to be able to establish reliability in this type of computation we need first to be able to correctly model the workers' behavior. A number of attempts have been made in the past to classify these workers. In Distributed Computing a classical approach is to model the malfunctioning    (due to a  hardware or a software
error) or cheating (intentional wrongdoer) as {\em malicious} Byzantine workers
that wish to hamper the computation and thus always return an incorrect result. The
non-faulty workers are viewed as {\em altruistic} ones~\cite{boinc_survey} that always return the correct result. On the other hand, a game-theoretic approach assumes that workers are {\em rational}~\cite{Halp06,UDC,rational}, that is, a worker decides whether to truthfully compute and
return the correct result or return a bogus result, based on the strategy that best serves
its self-interest (increases its benefit). 
Following the literature and also the motivating  examples we will categorize our workers into three types: malicious, altruistic and rational. 
Moreover we assume that the master will be interacting with the same workers for a long period of time. In the case of volunteer computing this is a natural assumption since participants usually remain in the system for a long period of time~\cite{nov2010volunteer}. On the other hand in a crowdsourcing setting it is desirable to form a group of workers that will become ``experts'' on the type of tasks requested by the master. 

Under these assumptions our goal is take advantage of the repeated interaction of the master with the workers to guarantee that at some future step of the master's interaction with the workers, the master will always be receiving the correct task result with the minimum cost.

\subsection{Background}

As part of our mechanism we use reinforcement learning to induce the correct behavior of rational workers.
Reinforcement learning~\cite{RLbook} models how system entities, or {\em learners}, interact with the environment to decide upon a strategy, and use their experience to select or avoid actions according to the consequences observed. Positive payoffs increase the probability of the strategy just chosen, and negative payoffs reduce this probability. Payoffs are seen as parameterizations of players' responses to their experiences. There are several
models of reinforcement learning. A well-known model is that of Bush and Mosteller~\cite{BM55}; this is an aspiration-based reinforcement learning model where negative effects on the probability distribution over strategies are possible, and learning does not fade with time. The learners adapt by comparing their experience with an {\em aspiration} level. In our work we adapt this reinforcement learning model and we consider a simple aspiration scheme where aspiration is fixed by the workers and does not change during the evolutionary process.

 
The master reinforces its strategy as a function of the reputation calculated for each worker.
Reputation has been widely considered in on-line communities that deal with untrustworthy entities, such as online auctions (e.g., eBay) or P2P file sharing sites (e.g., BitTorrent); it provides a mean of evaluating the degree of
trust of an entity~\cite{survey07}. Reputation measures can be characterized in many ways, for example, as objective or subjective, centralized or decentralized. An objective measure comes from an objective assessment process while a subjective measure comes from the subjective belief that each evaluating entity has. In a centralized reputation scheme a central authority evaluates the entities by calculating the ratings received from each participating entity. In a decentralized system entities share their experience with other entities in a distributed manner. In our work, we use the master as a central authority that objectively calculates the reputation of each worker, based on its interaction with it; this centralized approach is also used by BOINC. 
%

%

The BOINC system itself uses a form of reputation~\cite{boinc_reputation} for an optional policy called adaptive replication. This policy avoids replication in the event that a job has been sent to a highly reliable worker. The philosophy of this reputation scheme is to be intolerant to cheaters by instantly minimizing their reputation. Our mechanism differs significantly from the one that is used in BOINC. One important difference is that we use auditing to check the validity of the worker\rq{}s answers while BOINC uses only replication; in this respect, we have a more generic mechanism that also guarantees reliability of the system. Notwithstanding inspired by the way BOINC handles reputation we have designed a BOINC-like reputation type in our mechanism (called \typeD).
The adaptive replication policy currently used by BOINC has changed relatively recently. BOINC used to have a different policy~\cite{boinc_reputation_legacy}, where a long time was required for the worker to gain a good reputation but a short time to lose it. 
In this work we evaluate the two policies used by BOINC, adapted of course to our mechanism. We call the old policy \typeC and we seek to understand the quantitative and qualitative improvements among the two schemes.

 Sonnek et al.~\cite{sonnek07} use an adaptive reputation-based technique for task scheduling in volunteer setting (i.e., projects running BOINC).  Reputation is used as a mechanism to reduce the degree of redundancy while keeping it possible for the master to verify the results by allocating more reliable nodes. In our work we do not focus on scheduling tasks to more reliable workers to increase reliability but rather we design a mechanism that forces the system to evolve to a reliable state. We also demonstrate several tradeoff between reaching a reliable state fast and the master\rq{}s cost. We have created a reputation function (called reputation \typeA) that is analogous to the reputation function used in~\cite{sonnek07} to evaluate this function\rq{}s performance in our setting.

\subsection{Our contributions} 

We aim at establishing a reliable computation system in task computing settings such as volunteer computing and crowdsourcing. This computation system is modelled as a master-worker interaction, where the master assigns tasks to a fixed set of workers in an online fashion. (Prediction mechanisms such as the one in~\cite{lazaro2012long} can be used to establish the availability 
of a set of workers for a relatively long period of time.) Workers on the other hand are active, aware of their repeated interaction with the master and willing to reply. They are categorized into three types: (1) altruistic, (2) malicious and (3) rationals. The good behavior of the rational workers is reinforced through an incentives mechanism and while the malicious workers are being identified through a number of reputation schemes proposed. The goal of the system is to achieve  a stable state where the master will always receive the correct task reply with the minimum cost to the master (i.e. auditing), from there forth. In detail, our contributions are as follows.

\begin{itemize}[leftmargin=4mm]   
\item We design such 
an algorithmic mechanism that uses reinforcement learning (through reward and optional punishment) to induce a correct behavior to rational workers while coping 
with malice using {\em reputation}. 
\item We consider a centralized reputation scheme controlled by the master that
may use four different reputation metrics to calculate each worker's reputation. The
first (reputation type \typeA) is adopted from~\cite{sonnek07} and it is a simple approach for calculating reputation. The second reputation type (called \typeB) , which we introduce, allows for a more drastic change of reputation simply because the mathematical function used changes faster. 	The third reputation type is inspired by BOINC's current reputation scheme~\cite{boinc_reputation}, thus we call it \typeD. Finally, the fourth reputation scheme~\cite{boinc_reputation} is inspired by the previously used reputation scheme~\cite{boinc_reputation_legacy} of BOINC (called \typeC) and it uses an 
indirect way of calculating reputation through an error rate.
\vspace{.2em}
\item We analyze our reputation-based mechanism modeling it as a Markov chain and we identify conditions under which truthful behavior can be ensured. We analytically prove that by using the reputation type \typeB (i.e. the one we introduce) reliable computation is eventually achieved.\vspace{.2em}
\item Simulation results, obtained using parameter values extracted by BOINC-operated applications (such as \cite{emBoinc,einstein} ), reveal interesting
trade-offs between various metrics and parameters, such as cost, time of convergence to a truthful behavior, tolerance to cheaters and the type of reputation metric employed. Simulations also reveal better performance of our reputation type (\typeB) in several realistic cases.
\end{itemize}


\subsection{Related work}  


In Distributed Computing intentional or unintentional misbehaviours (a hardware or software error) from the workers are modelled assuming Byzantine (malicious) workers that want to harm the computation and thus reply with an incorrect result. Workers that are not misbehaving, that is they reply with a correct result are seen as altruistic workers.       Under this view, 
malicious-tolerant protocols have been considered, e.g.,~\cite{Sarmenta02,ALEX,PPL12}, where the master 
decides on the correct result based on majority voting.
On the other hand, game-theoretic approaches have been devised modelling the workers as {\em rational}~\cite{Halp06,UDC,rational}, that will compute and reply with the correct result only if this strategy maximizes their benefit (utility), otherwise they will choose a strategy where they are dishonest, replying with an incorrect result. Under this view, incentive-based algorithmic mechanisms have been
devised, e.g.,~\cite{CCS,NCA08}, that employ reward/punish schemes to ``enforce'' rational workers
to act correctly.

In~\cite{IEEETC14}, all three types were considered, and both approaches were combined in order to produce an algorithmic mechanism that provides incentives to rational workers to act correctly, while alleviating the malicious workers' actions. All the solutions described are {\em one-shot (or stateless)} in the sense that the master decides about the outcome of an interaction with the workers involving a specific task, without using any knowledge gained by prior interactions. In~\cite{CCPE13}, we took advantage  of the repeated interactions between the master and the workers, assuming the presence of {\em only} rational workers.  For this purpose, we studied the {\em dynamics of evolution}~\cite{MSJ82} of such master-worker computations through {\em reinforcement learning}~\cite{RLbook} where both the master and the workers 
adjust their strategies based on their prior interaction. The objective of the master is to reach 
a state in the computation after which it always obtains the correct results, while the workers attempt to increase their benefit. 
Hence, prior work either considered all three types of workers in one-shot computations, or multi-round interactions assuming only rational workers.
%

Aiyer et al.~\cite{BAR} introduce the BAR model to reason
about systems with Byzantine (malicious), Altruistic, and
Rational participants. They also introduce the notion
of a protocol being BAR-tolerant, that is, the protocol
is resilient to both Byzantine faults and rational manipulation. As an application, they
designed a cooperative backup service for P2P systems, based on
a BAR-tolerant replicated state machine. Li et al~\cite{BAR-gossip}
also considered the BAR model to design a P2P live streaming application
based on a BAR-tolerant gossip protocol. Both works employ incentive-based
game theoretic techniques (to remove the selfish behavior), but the
emphasis is on building a reasonably practical system (hence, formal
analysis is traded for practicality). 
Recently, Li et al~\cite{BAR-FP} developed a P2P streaming application,
called FlightPath, that provides a highly reliable data stream to a dynamic 
set of peers. FlightPath, as opposed to the above-mentioned BAR-based works,
is based on mechanisms for {\em approximate equilibria}~\cite{CS_SODA07},
rather than strict equilibria. In particular, $\epsilon$-Nash equilibria
are considered, in which rational players deviate if and only if
they expect to benefit by more than a factor of $\epsilon$. As the authors
claim, the less restrictive nature of these equilibria enables the
design of incentives to limit selfish behavior rigorously, while 
it provides sufficient flexibility to build practical systems.
 More recent works have considered other problems in the BAR model (e.g., data transfer~\cite{BARTransfer}). Although the objectives and the model considered are different, our reputation-based mechanism can
be considered, in some sense, to be BAR-tolerant.

Various surveys focus on the obstacles and challenges that current crowdsourcing approaches face~\cite{yuen2011survey,kittur2013future,silberman2010ethics}.
One of the most crucial issues that crowdsourcing faces is the cheating behaviour of workers. 
In~\cite{zhang2012reputation} reputation-based incentive protocols are presented. The analysis is done over the presence of many masters that are matched with workers. The task is assigned to a single worker and payments are ex-ante (i.e., the master pays before the worker performs the task). They design an algorithm that prevents workers from not putting any effort in performing the task (i.e., this can be perceived as cheating). 
Eickhoff and de Vries~\cite{eickhoff2013increasing} investigate the nature and the causes for the cheating behavior of workers in crowdsourcing platforms. They categorize Human Intelligence Tasks (HITs) in two categories: (1) \emph{closed class questions} where the workers selects the correct answer from a limited list of options, and (2) \emph{open class questions} where the answer is not a strict list and/or the task is of a creative nature. In their work they propose ways to discourage cheaters from performing HITs by transforming the task to be less appealing to them. Our approach also maintains a reputation for each worker (in order to know how reliable his answers are),
but it does not discourage workers from participating. Instead, it tries to reinforce their good behavior to use them in future computations. We do not consider here the problem of matching masters and workers, which we assume solved, and focus on the problem with one master and multiple workers, all of them used by the master. 




\section{Model}
\label{rep:model}

In this section we characterize our model and we present the concepts of auditing, payoffs, rewards and aspiration. We also give a formal definition of the four reputation types used by our mechanism.  
\paragraph{\bf \em Master-Worker Framework}
\noindent
We consider a system consisting of a set $G$ of voluntary workers. The set $G$ is broken down into disjoint sets $W_j$ of size $n$ forming the group of workers receiving a replica of the same tasks. For simplicity we will focus at only one such set of workers named $W$. Hence we consider a master and a set $W$ of $n$ workers.
The computation is broken into
{\em rounds}, and in each round the master sends a task to the workers to compute and return 
the result. Based on the workers' replies, the master must decide which is the value most likely to be the correct result for this round. Crowdsourcing applications provide the possibility of selecting workers~\cite{turk,microworkers}.
\paragraph{\bf \em Tasks}\noindent From the perspective of crowdsourcing, tasks are closed class questions, whereas from the perspective of BOINC-operated applications, they are computations. These tasks 
have a unique solution; although such limitation reduces the scope of application of the 
presented mechanism~\cite{TACB05}, there are plenty of computations where the correct 
solution is unique: e.g., any mathematical function. 

\paragraph{\bf \em Worker types} \noindent We consider that workers can be categorized in three types: {\em rational, altruistic} and {\em malicious}.
Rational workers are selfish in a game-theoretic sense and their aim is to maximize their utility (benefit). In the context of this paper, a worker is {\em honest} 
in a round, when it truthfully computes and returns the correct result, and it {\em cheats} when it returns 
some incorrect value. Altruistic and malicious workers have a predefined behavior, to always be honest or cheat, respectively. Instead, a rational worker decides to be honest or cheat depending on which strategy maximizes its utility. We denote by $p_{Ci}(r)$ the probability of a rational worker $i$ cheating in round $r$. This probability is not fixed and the worker adjusts it over the course of the computation. The master is not
aware of the worker types, neither of a distribution of types (our mechanism does not rely on any statistical information). 


While workers make their decision individually and with no coordination, following \cite{Sarmenta02} and \cite{PPL12}, we assume that all the workers that cheat in a round  return the same incorrect value; this yields 
a worst case scenario (and hence analysis) for the master with respect to obtaining the correct result using mechanisms where the result is the outcome of voting. It subsumes models where cheaters do not 
necessarily return the same answer. (This can be seen as a weak form of collusion.)

For simplicity, unless otherwise stated, we assume that workers do not change their type over time.
In practice it is possible that changes occur.
For example, a rational worker might become malicious due to a bug, or a malicious worker
(e.g., a worker under the influence of a virus) become altruistic (e.g., if an antivirus software reinstates it). 
If this may happen, then all our results still apply for long enough periods between two changes. (In our simulation
study we do consider scenarios where the workers change their type dynamically.

\paragraph{\bf \em Auditing, Payoffs, Rewards and Aspiration}
\noindent
To induce the rational workers to be honest, the master employs, when necessary, {\em auditing} and
\emph{reward/punish} schemes. The master, in a round, might decide to audit the response of the workers, 
at a cost. 
In this work, auditing means that the master computes the task by itself, and checks
which workers have been honest. We denote by $p_\VRF(r)$ the probability of the master auditing the 
responses of the workers in round $r$. The master can change this auditing probability over the course of the computation,
but restricted to a minimum value $p_\VRF^{min}>0$.
When the master audits, it can accurately reward and punish workers. 
When the master does not audit, it rewards only those in the weighted majority (see below)
of the replies received and punishes no one. 

\junk{

\begin{table}[h]\centering
\begin{tabular}{|c|l|}
\hline
$\Cp$& worker's punishment for being caught cheating\\
\hline
$\Ct$& worker's cost for computing the task\\
\hline
$\SW$& worker's benefit from master's acceptance\\
\hline
$\Cw$& master's punishment for accepting a wrong answer\\
\hline
$\SM$& master's cost for accepting the worker's answer\\
\hline
$\Cv$& master's cost for auditing worker's answers\\
\hline
$\Bc$& master's benefit from accepting the right answer\\
\hline
\end{tabular}
\caption{Payoffs. The parameters are non-negative.}
\label{table:payoffs}
\end{table}
}
\noindent
In this work we consider three worker payoff parameters: 
\vspace{-.5em}
\begin{itemize}
\item[(a)]$\Cp$: worker's punishment for being caught cheating,
\item[(b)]$\Ct$: worker's cost for computing a task
\item[(c)]$\SW$: worker's benefit (typically payment) from the
master's reward. 
\end{itemize}  
Also, following~\cite{BM55}, we assume that, in every round, a worker $i$ has an 
{\em aspiration} $a_i$: the minimum benefit it expects to obtain in a round. 
In order to motivate the worker to participate in the computation, the master usually ensures that 
$\SW \geq a_i$; in other words, the worker has the potential of its aspiration to
be covered. We assume that the master knows the aspirations.
Finally, we assume that the master has the freedom of choosing $\SW$ and $\Cp$ with 
goal of eventual correctness.

\paragraph{\bf \em Eventual Correctness} \noindent
The goal of the master is to eventually obtain a reliable computational platform: After some 
finite number of rounds, the system must guarantee that the master obtains the correct task results 
in every round with probability $1$ and audits with probability $p_\VRF^{min}$. 
We call such property \emph{eventual correctness}.

\paragraph{\bf \em Reputation} \noindent
The reputation of each worker is measured by the master; a centralized reputation mechanism is used. 
In fact, the workers are unaware that a reputation scheme is in place, and their interaction with 
the master does not reveal any information about reputation; i.e., the payoffs do not depend on a worker's reputation. 

In this work, we consider {\em four} reputation metrics. 
The first one is analogous to a reputation metric used in~\cite{sonnek07}  and we call it {\em \typeA} in this work. 
Reputation {\em \typeD} is inspired by the BOINC's adaptive replication metric currently in use~\cite{boinc_reputation}, while reputation metric {\em \typeC}
is inspired by the previously used version of BOINC's adaptive replication metric~\cite{boinc_reputation_legacy}.  
We present the performance of our system under both BOINC reputation metrics as an opportunity to compare and contrast these two schemes within our framework. 
Finally, the last reputation metric we consider is reputation {\em \typeB} that is not influenced by any other reputation type, and as we show in Section~\ref{rep:analysis} it possesses beneficial properties.  
 In  all  types, the reputation of a worker is determined based on the number of times it was found truthful. Hence, the master may update the reputation of the workers only when it 
audits. We denote by $aud(r)$ the number of rounds the master audited up to round $r$, and by $v_i(r)$ we refer to the number of auditing rounds in which worker $i$ was found truthful up to round $r$. 
Moreover, we define $streak_i(r)$ as the number of rounds $\leq r$ in which worker $i$ was audited, and replied correctly after the latest round in which it was audited, and caught cheating. 
We let $\rho_i(r)$ denote the {\em reputation} of worker $i$ after round $r$,  and for a given set of workers $Y\subseteq W$ we let $\rho_Y(r)=\sum_{i \in Y} \rho_i(r)$ be the aggregated reputation of the workers in $Y$, by aggregating we refer to summing the reputation values.
Then, the reputation types we consider are detailed in Figure~\ref{definition:reputation}.

\begin{figure}[t]
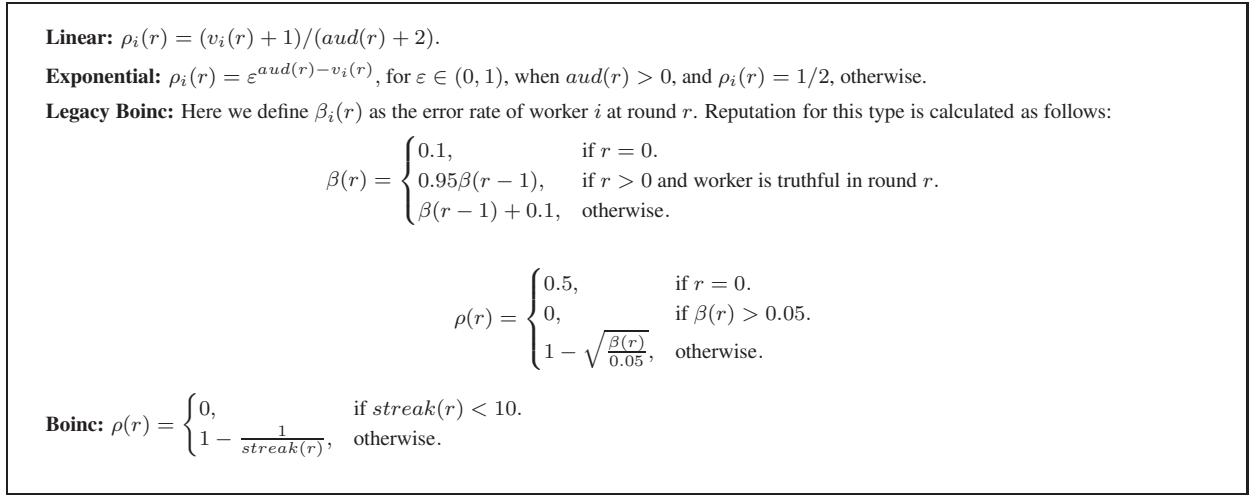

\begin{framed}
\begin{footnotesize}
\begin{itemize}[leftmargin=2mm]
\item [] {\bf \typeA:} 
$\displaystyle
\rho_i(r) = (v_i(r)+1)/(aud(r)+2).$ 
\item [] {\bf \typeB:} 
$\displaystyle
\rho_i(r) =  \varepsilon^{aud(r)-v_i(r)},$ for $\varepsilon \in (0,1)$, when $aud(r)>0$, and 
$\displaystyle 
\rho_i(r) =1/2$, otherwise.
\item [] {\bf \typeC:} Here we define $\beta_i(r)$ as the error rate of worker $i$ at round $r$. 
Reputation for this type is calculated as follows: \vspace{-1.5em}

\begin{align*}
  \beta(r)=\begin{cases}
    0.1, & \text{if $r=0$}.\\
    0.95\beta(r-1), & \text{if $r>0$ and worker is truthful in round $r$}.\\
    \beta(r-1)+0.1, & \text{otherwise}.
  \end{cases}\end{align*}

\begin{align*}
  \rho(r)=\begin{cases}
    0.5, & \text{if $r=0$}.\\
    0, & \text{if $\beta(r)>0.05$}.\\
    1- \sqrt{\frac{\beta(r)}{0.05}}, & \text{otherwise}.
  \end{cases}\end{align*}
  \item [] {\bf \typeD:} 
$
  \rho(r)=\begin{cases}
    0, & \text{if } streak(r)<10.\\
    1-\frac{1}{streak(r)}, & \text{otherwise}.
  \end{cases}$
\end{itemize}
\end{footnotesize}
\end{framed}
\caption{Reputation types.}
\label{definition:reputation}
\end{figure}

These four reputation types satisfy the following two natural properties: 

\begin{itemize}
\item[A.] If a worker is honest when the master audits, the reputation of the worker cannot decrease.
\item[B.] If a worker cheats when the master audits, the reputation of the worker cannot increase.
\end{itemize}

This claim is proven below in Lemmas~\ref{lemma1} and~\ref{lemma2}.

\begin{lemma}
\label{lemma1}
Natural Property A holds for reputation type \typeA , \typeB , \typeC and \typeD. 
\end{lemma}

\begin{proof} We present separately the proof for each reputation type. \\
{\it \typeA:} 
Assume that at state $s_r$ worker $i$ has reputation $\rho_i(r)= \frac{v_i(r)+1}{aud(r)+2}$ and in the next state the master audits and the worker is honest then reputation becomes $\rho_i(r+1)= \frac{v_i(r)+2}{aud(r)+3}$. Since $\frac{v_i(r)+1}{aud(r)+2} \leq \frac{v_i(r)+2}{aud(r)+3}$ the lemma holds for reputation type \typeA. \\
{ \it \typeB:}
Assume that at state $s_r$ worker $i$ has reputation $\rho_i(r)=\varepsilon^{aud(r)-v_i(r)}$ and in the next state the master audits and the worker is honest then reputation becomes $\rho_i(r+1)=\varepsilon^{aud(r)-v_i(r)}$, then the lemma trivially holds for reputation type \typeB.
\\ 
{ \it \typeC:} 
At state $s_r$ worker $i$ can have reputation $\rho_i(r)\geq 0$ and in the next state the master audits and the worker is honest then $\beta_i(r)< 0.05$ and $\rho_i(r+1)\geq 0$. If in state  $s_r$ $\rho_i(r)=0$ then the natural property holds. If in state $s_r$, $\rho_i(r)=1-\sqrt{\frac{\beta_i(r)}{0.05}}$ then in the next state $\rho_i(r+1)=1-\sqrt{\frac{\beta_i(r+1)}{0.05}}$. The property still holds since $\beta_i(r)<\beta_i(r+1)$ and thus the claim is proved for reputation type \typeC. \\
{\it \typeD:} 
Three possible cases exist:
1)Assume that at state $s_r$ worker $i$ has reputation $\rho_i(r)= 0$ and $streak_i(r)<9$ and in the next state the worker is honest, then reputation becomes $\rho_i(r+1)= 0$. Since $\rho_i(r)=\rho_i(r+1)$ the lemma holds for this case. 
2)Assume that at state $s_r$ worker $i$ has reputation $\rho_i(r)=0$ and $streak_i(r)=9$ and in the next state the worker is honest, then reputation becomes $\rho_i(r+1)=\frac{9}{10}$. Since $0 \leq \frac{9}{10}$ the lemma holds for this case too.
3)Finally, assume that at state $s_r$ worker $i$ has reputation $\rho_i(r)=1-\frac{1}{streak_i(r)}$ and $streak_i(r)\geq 10$ and in the next state the worker is honest, then reputation becomes   $\rho_i(r+1)=1-\frac{1}{streak_i(r+1)}$. Since $streak_i(r)<streak_(r+1)$ then $1-\frac{1}{streak_i(r)} \leq 1-\frac{1}{streak_i(r+1)}$ and the lemma holds for this case. Thus, the lemma for reputation type \typeD holds since it is true for all three possible cases. 
\end{proof}

\begin{lemma}
\label{lemma2}
Natural Property B holds for reputation type \typeA, \typeB, \typeC and \typeD. 
\end{lemma}

\begin{proof}
We present separately the proof for each reputation type. \\
{\it \typeA:} 
Assume that at state $s_r$ worker $i$ has reputation $\rho_i(r)= \frac{v_i(r)+1}{aud(r)+2}$ and in the next state the master audits and the worker cheats then reputation becomes $\rho_i(r+1)= \frac{v_i(r)+1}{aud(r)+3}$. Since $\frac{v_i(r)+1}{aud(r)+2} \geq \frac{v_i(r)+1}{aud(r)+3}$ the lemma holds. \\
{\it \typeB:} 
Assume that at state $s_r$ worker $i$ has reputation $\rho_i(r)=\varepsilon^{aud(r)-v_i(r)}$ and in the next state the master audits and the worker cheats then reputation becomes $\rho_i(r+1)=\varepsilon^{aud(r)+1-v_i(r)}$. Since $\varepsilon \in (0,1)$ then $\varepsilon^{aud(r)-v_i(r)} \geq \varepsilon^{aud(r)+1-v_i(r)}$ and the lemma holds. \\
{\it \typeC:} 
At state $s_r$ worker $i$ can have reputation $\rho_i(r)\geq 0$ and in the next state the master audits and the worker cheats then $\beta_i(r+1)>0.05$ and $\rho_i(r+1)=0$ and the claim holds.  \\
{\it \typeD:} 
Two possible cases exist: 1)Assume that at state $s_r$ worker $i$ has reputation $\rho_i(r)= 0$ and $streak_i(r)<10$ and in the next state the master audits and the worker cheats, then  $streak_i(r+1)=0$ and reputation becomes $\rho_i(r+1)= 0$. Since $\rho_i(r) = \rho_i(r+1) $ the lemma holds for this case. 2)Assume that at state $s_r$ worker $i$ has reputation $\rho_i(r)=1-\frac{1}{streak_i(r)} $ and $streak_i(r) \geq 10$ and in the next state the master audits and the worker cheats. Then $streak_i(r+1)=0$ and the reputation becomes  $\rho_i(r+1)= 0$. Since, $1-\frac{1}{streak_i(r)}>0$ the lemma hold also for this case. Thus, the lemma holds in general.  
\end{proof}

In each round, when the master {\em does not audit}, the result is obtained from the {\em weighted majority} as follows. 
Consider a round $r$. Let $F(r)$ denote the subset of workers that returned an incorrect result, i.e., the rational workers who chose to cheat plus the malicious ones; recall that we assume as a worst case that all cheaters return the same value. Then, $W\setminus F(r)$ is the subset of workers that returned the correct value, i.e., the rational workers who chose to be truthful plus the altruistic ones. Then, if $\rho_{W\setminus F(r)}(r) > \rho_{F(r)}(r)$, the master will accept the correct value, otherwise it will accept an incorrect value. 
The  mechanism, presented in the next section, employs auditing and appropriate incentives so 
that rational workers become truthful and have a reputation that is higher than that of the malicious workers.

\section{Reputation-based Mechanism}
\label{rep:mechanism}

We now present our reputation-based mechanism. 
The mechanism is composed by an algorithm run by the master and an algorithm
run by each worker.

\paragraph{\bf \em Master's Algorithm} \noindent
The algorithm begins by choosing the initial probability of auditing and the initial reputation (same for all workers). The initial probability of auditing will be set according to the information the master has
about the environment (e.g., workers' initial $p_C$). For example, if it has no information about the environment, a possibly safe approach is to initially set $p_\VRF=0.5$. The master also chooses the reputation type
to use. 

After that, at each round, the master sends a task to all workers and, when all answers are received, 
the master audits the answers with probability $p_\VRF$. In the case the answers are not audited, the master accepts the value returned by the weighed majority,
and continues to the next round with the same probability of auditing and the same reputation values for each worker. In the case the answers are audited, 
the value $p_\VRF$ of the next round is reinforced (i.e., modified according to the accumulated reputation of the cheaters) and the reputations of the workers are updated based on their responses.
Then, the master rewards/penalizes the workers appropriately. Specifically, if the master audits and a worker $i$ is a cheater (i.e., $i\in F$), then $\Pi_i = -\Cp$; if $i$ is honest, then $\Pi_i = \SW$. If the master does not audit, and $i$ returns the value of the weighted majority (i.e., $i\in W_m$), then
$\Pi_i = \SW$, otherwise $\Pi_i=0$.

\lstset{
columns=fullflexible,
tabsize=3,
basicstyle=\small,
identifierstyle=\rmfamily\textit,
mathescape=true,
morekeywords={to,process,const,when,elsif,procedure,null,function,do,return,foreach,begin,var,if,then,else,rcv,rx_user_events,rx_network_events,
for,end,receive,send,upstream,downstream,while,do,forward,backward,upon,wait,audit,update,accept},
literate={:=}{{$\leftarrow$ }}1{->}{{ $\rightarrow$ }}1,
escapeinside={('}{')}}
%
%


\begin{algorithm}[t]
\begin{lstlisting}
$p_\VRF$ := $x$, where $x \in [p_\VRF^{min}, 1]$
$aud = 0$
// initially all workers have the same reputation
$\forall i\in W: v_i = 0; \beta_i=0.1; \rho_i = 0.5$; $streak_i=0$ 			
for $r$ := $1$ to  $\infty$ do
  send a task $T$ $\mathit{to}$ all workers in $W$
  upon receiving all answers do
    audit the answers with probability $p_\VRF$
    if the answers were not audited then
      // weighted majority, coin flip in case of a tie
      accept the value returned by workers in $W_m\subseteq W,$
            where $\rho_{W_m}>\rho_{W\setminus W_m}$  			
    else 		 		// the master audits
      $aud$ := $aud + 1$	      
      Let $F\subseteq W$ be the set of workers that cheated.
      $\forall i\in W: $  
         // honest workers   	
      	if $i \notin F$ then $v_i$ :=$v_i + 1$ or $\beta_i$:= $\beta_i \cdot 0.95$ or 
	                   $streak_i$:=$streak_i+1$
      	 // cheater workers
      	 else $v_i$ :=$v_i$ or $\beta_i$:= $\beta_i + 0.1$ or $streak_i$:=$0$
      	update reputation $\rho_i$ of worker $i$ as defined 
      	       by reputation type used		
      if ${\rho_W} = 0$ then $p_\VRF$ := $\min\{1, p_\VRF+\am\}$  else
        $p_\VRF$ := $\min\{1, \max\{p_\VRF^{min}, p_\VRF+\am(\frac{\rho_F}{\rho_W}-\tau) \}\}$ 
    $\forall i\in W :$ return payoff $\Pi_i \mathit{to}$ worker $i$ 
\end{lstlisting}
\caption{Master's Algorithm}
\label{alg3}\vspace{-.5em}
\end{algorithm}

 \begin{algorithm}[t]
\begin{lstlisting}
$p_{Ci}$ := $y$, where $y \in [0, 1]$
for $r $ := $1$ to $\infty$ do
  receive a task $T$ from the master
  $S_i$ := $-1$ with probability $p_{Ci}$, 
  and $S_i$ := $1$ otherwise
  if $S_i = 1$ then
      $\sigma$ := $compute(T)$, 
  else
      $\sigma$ := arbitrary solution
  send response $\sigma$ $\mathit{to}$ the master
  get payoff $\Pi_i$
  if $S_i = 1$ then
     $\Pi_i$ := $\Pi_i - \Ct$ 
  $p_{Ci}$ := $\max\{0,\min\{1,p_{Ci} - \aw(\Pi_i-a_i)S_i\}\}$
\end{lstlisting}
\caption{Algorithm for Rational Worker $i$}
\label{alg2}\vspace{-.5em}
\end{algorithm}

\junk{
Given that, when the answers are not audited, the master has no information about the number of cheaters in the round, the probability $p_\VRF$ remains the same as in the previous round.
When the answers are audited, the master can safely extract the number of cheaters.
\deleted[A]{and readjust $p_\VRF$ appropriately, hence reinforcement learning is applied.}
Then, the \sout{master adapts the} auditing probability $p_\VRF$ is adapted according {\color{magenta} to the accumulated reputation}. 
Observe that the algorithm guarantees $p_\VRF \geq p_\VRF^{min}$.
This, combined with the property $p_\VRF^{min} > 0$ will
prevent the system to fall in a permanent set of ``bad" states where
$p_\VRF =0$ and $p_C >0$.
\sout{A discount factor, \added[C]{which we call {\em tolerance} and denote by $\tau$}, expresses the master's tolerable ratio of {\color{magenta} accumulate cheater reputation}}
}

We include a threshold, denoted by $\tau$, that represents the master's \emph{tolerance} to cheating
(typically, we will assume $\tau=1/2$ in our simulations). If the ratio of the aggregated reputation of cheaters with respect to the total is larger than $\tau$, $p_\VRF$ is increased, and decreased otherwise.
The amount by which $p_\VRF$ changes depends on the difference between these
values, modulated by a {\em learning rate} $\am$. This latter value determines to what extent the newly acquired information will override the old information. (For example, if $\am=0$ the master will never adjust $p_\VRF$.)
A pseudocode of the algorithm described is given as Algorithm~\ref{alg3}.


\paragraph{\bf \em Workers' Algorithm} \noindent
This algorithm is run only by rational workers (recall that altruistic and malicious workers have a
predefined behavior).\footnote[1]{Since the workers are not aware that a reputation scheme is used, this algorithm is the one considered in~\cite{CCPE13}; we describe it here for self-containment.} The execution of the algorithm begins with each rational worker $i$ deciding an initial probability of cheating $p_{Ci}$.  
In each round, each worker receives a task from the master and, with probability $1-p_{Ci}$ computes 
the task and replies to the master with the correct answer. Otherwise, 
it fabricates an answer, and sends the incorrect response to the master. We use a flag $S_i$ to model the stochastic decision of a worker $i$ to cheat or not. After receiving its payoff, each worker $i$ changes its $p_{Ci}$ according to payoff $\Pi_i$, the chosen strategy $S_i$, and its aspiration $a_i$. 

The workers have a {\em learning rate} $\aw$. In this work, we assume that all workers have the same learning 
rate, that is, they learn in the same manner (see also the discussion in~\cite{RLbook}; the learning rate is called step-size there); note that our analysis can be adjusted to accommodate also workers with different learning rates. 
A pseudocode of the algorithm is given as Algorithm~\ref{alg2}.

\junk{

\subsection{Model}
\label{rep:model}

\paragraph{Master-Worker Framework.}
We consider a distributed system consisting of a master processor that assigns, over the Internet, 
computational tasks to a set $W$ of $n$ workers \sout{(w.l.o.g., we assume that $n$ is odd)}. In particular, the computation is broken into
\deleted[M][if we want to say asynchronous we need to explain in what sense]{(asynchronous)} {\em rounds}, and in each round the master sends a task to the workers to compute and return 
the task result. The master, based on the \replaced[A]{workers'}{received} replies, must decide on the value 
it believes is the correct outcome of the task in the same round. 
The tasks considered in this work are assumed to have a 
unique solution; although such limitation reduces the scope of application of the presented mechanism \cite{TACB05}, 
there are plenty of computations where the correct solution is unique: e.g., any mathematical function.

\added[A]{While it is assumed that workers
make their decision individually and with no coordination, it is assumed that all the workers that cheat in a round
return the same incorrect value (as done, for example, in \cite{Sarmenta02},~\cite{SRDS} and~\cite{D4}).}
This yields a worst case scenario (and hence analysis) for the master
with respect to obtaining the correct result \added[M]{using mechanisms where the result is the outcome of voting}.
It subsumes models where cheaters do not 
necessarily return the same answer. (In some sense, this can be seen as a cost-free, weak form of collusion.) 

\paragraph{Worker types.} 
Workers can be {\em rational}, that is, they are selfish in a game-theoretic
sense and their aim is to maximize their utility (benefit) under 
the assumption that other workers do the same. In the context of this paper, a worker is {\em honest} 
\added[A]{in a round,}
when it truthfully computes and returns the \deleted[M]{correct} task result, and it {\em cheats} when it returns some 
incorrect value. 
So, a rational worker decides to be honest or cheat depending on which strategy maximizes its utility. 
\deleted[A]{It is possible that the worker changes strategies over the course of the computation.}
We denote by $p^{r}_{Ci}$ the probability of a worker $i$ cheating in round $r$. This probability is not
\replaced[M]{fixed, the worker adjusts it }{fixed a priori; the worker adjusts this probability} over the course of the computation. 
Besides rational workers, we also consider 
{\em altruistic} and {\em malicious} workers. Altruistic workers always 
truthfully compute and return the correct result, while malicious workers always return 
a bogus result. In other words, altruistic and malicious workers have a 
predefined behavior, be honest or cheat, respectively. The master is not
aware of the worker types, neither of a distribution of types (our developed
mechanism does not rely on any statistical information). 

It is possible that the workers change their behavior over time (especially in
potentially infinite computations), that is, a rational worker, e.g., 
due to a bug might become malicious, or alternatively, a malicious worker
(e.g., it was under the influence of a virus) becomes rational or altruistic (e.g.,
an antivirus software has reinstated it), and so on. To keep the discussion simple (and the
analysis feasible), we will focus on (long) periods of time in which each of the $n$ workers
has the same type (rational, malicious, or altruistic) in the considered period.
We show that eventual correctness is reached in a few rounds at the beginning of that
period. Hence, if the behavior (type) of certain workers changes, and eventual 
correctness is lost, this can be seen as the beginning of another period, in which
eventual correctness will again be reached (our mechanism achieves that, as it would
do if this new ``type configuration'' was the initial one).  

\paragraph{Eventual Correctness.}
The goal of the master is to eventually obtain a reliable computational platform. In other words, after some finite 
number of rounds, the system must guarantee that the master obtains the correct task results in every round
with probability $1$. We call such property \emph{eventual correctness}.

\paragraph{Auditing, Payoffs, Rewards and Aspiration.}
\replaced[A]{To 
induce the
workers to be honest,}{To achieve its objectives,} the master employs, when necessary, {\em auditing} and
\replaced[A]{\emph{reward/punish}}{{\em reward/penalizing}} schemes. The master, in a round, might decide to audit
the response of the workers (at a cost). In this work,  
auditing means that the master computes the task by itself, and checks
which workers have been \replaced[A]{honest.}{truthful or not.} We denote by $p_\VRF$ the probability
of the master auditing the responses of the workers. The master
can change this auditing probability over the course of the computation,
\sout{However, unless otherwise stated, we assume that
there is a value $p_\VRF^{min}>0$ so that at all times $p_\VRF \geq p_\VRF^{min}$.} 
but restricted to a minimum value $p_\VRF^{min}>0$.
\sout{Furthermore, the master can reward and punish workers, which can be used (possibly combined with auditing) 
to encourage workers to be honest. }
When the master audits, it can accurately reward and punish workers. 
When the master does not audit, 
\sout{ it decides on the majority of the received replies, and it rewards only the majority. }
it rewards only the majority (weighted by reputation) of the replies received.
We refer to this as the $\majoritymodel$ reward scheme.
     
The payoff parameters considered in this work are detailed in Table~\ref{table:payoffs}.
Note that the first letter of the parameter's name identifies whose parameter it is. 
$M$ stands for master and $W$ for worker. Then, the second letter gives the type of parameter.
$P$ stands for punishment, $C$ for cost, and $B$ for benefit.
Observe that there are different parameters for the reward 
$\SW$ to a worker and the cost $\SM$ of this reward to the master. This models the fact
that the cost to the master might be different from the benefit for a
worker (this is the case, for example, in SETI@home~\cite{SETI}).

\begin{table}[h]\centering
\begin{tabular}{|c|l|}
\hline
$\Cp$& worker's punishment for being caught cheating\\
\hline
$\Ct$& worker's cost for computing the task\\
\hline
$\SW$& worker's benefit from master's acceptance\\
\hline
$\Cw$& master's punishment for accepting a wrong answer\\
\hline
$\SM$& master's cost for accepting the worker's answer\\
\hline
$\Cv$& master's cost for auditing worker's answers\\
\hline
$\Bc$& master's benefit from accepting the right answer\\
\hline
\end{tabular}\vspace{-.3em}
\caption{Payoffs. The parameters are non-negative.}
\label{table:payoffs}\vspace{-.5em}
\end{table}

We assume that, in every round, a worker $i$ has an {\em aspiration} $a_i$, that
is, the minimum benefit it expects to obtain in a round. In order to motivate the
worker to participate in the computation, the master must ensure that 
$\SW \geq a_i$; in other words, the worker has the potential of its aspiration to
be covered. We assume that the master knows the \replaced[M]{aspirations. }{aspiration of \replaced[A]{each worker.}{the workers.}}
\sout{This information can be included, for example, in a contract the master and the
worker agree on, prior to the start of the computation.}

Note that among the parameters involved, we assume that the 
master has the freedom of choosing $\SW$ and $\Cp$; by tuning these parameters
and choosing $n$, the master can achieve the goal of eventual correctness.
All other parameters can either be fixed because they are system parameters 
or may also be chosen by the master (except \added[A]{the} aspiration, which is a parameter
set by each worker).

\paragraph{Reputation.}
\sout{The master employs reputation schemes mainly to cope with malicious workers (as we have seen in the previous section, in the absence of malice, the problem can be solved without the use of reputation). }The reputation of each worker is measured by the master (hence, a centralized reputation mechanism is used). 
In fact, the workers are 
unaware that a reputation scheme is in place, 
 and their interaction with the master does not reveal any information about reputation. Specifically, the payoffs do not depend on a worker's reputation. Hence, reputation is only used to assist the master to obtain the correct result rather than to provide more incentives for workers to be truthful (that would be the case when payoffs would depend on reputation)\sout{; this is taken care with auditing and punish/reward schemes. Hence, the workers cannot distinguish the basic mechanism from the
reputation-based mechanism we present in the next section.} 

In this work, we consider two reputation functions. The first one
(which we call \typeA) is analogous to a reputation function used in~\cite{sonnek07} {\color{magenta}(see Section~\ref{sec:review},
paragraph Reputation, for details)}. We also consider a second type which allows for a more drastic increase/decrease of
reputation. In both types, the reputation of a worker is calculated based on the number of times it was
found truthful. Hence, the master may update the reputation of the workers only when it audits\sout{ (it computes
the task by itself), in which case knows for sure whether a worker was honest or not}. We denote by $aud(r)$ the number of rounds the master audited up to round $r$, \sout{. Then} and by $v_i(r)$ we refer to the number of auditing rounds in which worker $i$ was found truthful up to round $r$. We let $\rho_i(r)$ denote the reputation of worker $i$ after round $r$,  \sout{We refer to $X_Y(r)$, for $Y\subseteq W$, to abbreviate, $\sum_{i \in Y} X_i(r)$ for some parameter $Y$. } and for a given set of workers $Y\subseteq W$ we let $\rho_Y(r)$ be the aggregated reputation of the workers in $Y$.
Then, the two types we consider are the following:
\begin{itemize}
\item 
[] {\bf \typeA:} 
$\displaystyle
\rho_i(r) = \frac{v_i(r)+1}{aud(r)+2}.
$
\item
[] {\bf \typeB:} 
$\displaystyle
\rho_i(r) =1/2
$, when $aud(r)=0$ and, for $\varepsilon \in (0,1)$,
$\displaystyle 
\rho_i(r) =  \varepsilon^{aud(r)-v_i(r)}
$, otherwise.
\end{itemize}

In each round when the master does not audit, the result is obtained from the \emph{weighted majority} as follows.
Consider a round $r$. 
Let $F(r)$ denote the subset of workers that returned an incorrect result (i.e., the rational workers who chose to cheat plus the malicious ones); recall that \sout{for a worst-case analysis}  we assume as a worst case that all cheaters return the same value. Then, $W\setminus F(r)$ is
the subset of workers that returned the correct value (i.e., \sout{these are} the rational workers who chose to be truthful plus the altruistic ones).
Then, if $\rho_{W\setminus F(r)} > \rho_{F(r)}(r)$ the master accepts the correct value, otherwise it will accept an incorrect value. The reputation-based mechanism {\color{magenta} (presented next)}
employs auditing and appropriate incentives so that rational workers become truthful with high reputation, while
malicious workers (alternatively altruistic workers) end up having very low (altr. very high) reputation after
a few auditing rounds, which is very desired.\\ 

\noindent{\em A remark on maliciousness:} One may argue that malicious workers could have a more ``intelligent'' strategy than just be cheaters all the time. There are four issues with respect to that:\\
(a) In the malicious type we also include workers that their computers malfunction (e.g., due to a software bug); such type of malicious workers have such ``unintelligent'' erroneous behavior.\\
(b) Workers are not aware of the use of reputation. Hence, malicious workers possibly will not ``realize'' that they might need to employ a more ``intelligent'' strategy to harm the computation.\\ 
(c) Empirical evaluations of SETI-like systems reported in~\cite{einstein} and \cite{emBoinc}, suggest that the percentage of malicious workers is not more than 10\%. In the presence of such a low percentage of malicious workers, it is not difficult to see that our reputation-based mechanism would be able to reach and maintain eventual correctness, even if malicious workers were more ``intelligent''. For example, they could first act correctly to get high reputation, and then start acting maliciously. However, our reputation-based mechanism makes the rational workers to eventually become truthful, and hence, together with the altruistic workers (if any), they will compose a much ``heavier'' majority against which the malicious workers will not be able to cause any harm.\\
(d) From a theoretical stand point perhaps it would be interesting to study the case where a larger percentage of ``intelligent'' malicious workers exist; this is a possible future research direction.     
} 


\newcommand{\trfl}{\text{truthful }}
\newcommand{\untrfl}{\text{untruthful }}  
\section{Analysis}
\label{rep:analysis}


We now analyze the reputation-based mechanism. We mode the evolution of the mechanism as a Markov Chain, and then discuss the necessary and sufficient conditions
for achieving eventual correctness. Modeling a reputation-based mechanism as a Markov Chain is more involved than previous models that do not consider reputation (e.g.~\cite{CCPE13}).

\paragraph{\bf \em The Markov Chain} \noindent
Let the state of the Markov chain be given by a vector $s$. The components of $s$ are: for the master, the probability of auditing $p_\VRF$ and the number of audits before state $s$, denoted as $aud$; and for each {\em rational} worker $i$, the probability of cheating $p_{Ci}$, the number of \emph{validations} (i.e., the worker was honest when the master audited) before state $s$, denoted as $v_i$, the error rate $\beta_i$ and $streak_i$ (the number of consecutive times a workers was found honest since the last time she cheated). 
To refer to any component $x$ of vector $s$ we use $x(s)$. Then,
\begin{align*}
s =& \big\langle p_\VRF(s), aud(s), p_{C1}(s),p_{C2}(s),\dots,p_{Cn}(s), v_1(s), \\
& v_2(s),\dots, v_n(s), \beta_1(s), \beta_2(s),\dots, \beta_n(s), streak_1(s), \\
& streak_2(s),\dots, streak_n(s)  \big\rangle.
\end{align*} 


In order to specify the transition function, we consider the execution of the protocol divided in rounds. 
In each round, probabilities and \emph{counts} (i.e. numbers of validations, audits, error rate and streak) are updated by the mechanism as defined in Algorithms~\ref{alg3} and \ref{alg2}.
The state at the end of round $r$ is denoted as $s_r$.
Abusing the notation, we will use $x(r)$ instead of $x(s_r)$ to denote component $x$ of vector $s_r$. 
The workers' decisions, the workers' error rate, the number of cheaters, and the payoffs of each round $r>0$ are the stochastic outcome of the probabilities and counts at the end of round $r-1$.
We specify the transition from $s_{r-1}$ to $s_r$ by the actions taken by the master and the workers during round $r$.


In the definition of the transition function that follows, the probabilities are limited to $p_\VRF(s) \in [p_\VRF^{min},1]$ and for each rational worker $i$ to $p_{Ci}(s) \in [0,1]$, for any state $s$.
The initial state $s_0$ is arbitrary but restricted to the same limitations. 
Let $P_F(r)$ be the probability that the set of cheaters in round $r$ is exactly $F\subseteq W$. (That is, $P_F(r)=\prod_{j\in F}p_{Cj}(r-1)\prod_{k\notin F}(1-p_{Ck}(r-1))$.)
Then, the transition from state $s_{r-1}$ to $s_r$ is as follows.
\begin{itemize}[leftmargin=4mm]
\item Malicious workers always have $p_C = 1$ and altruistic workers always have $p_C=0$.
\item With probability $p_\VRF(r-1) \cdot P_F(r)$, the master audits when the set of cheaters is $F$. 
Then, according to Algorithms~\ref{alg3} and \ref{alg2}, the new state is as follows.
\begin{itemize}
\item[] For the master: 
$aud(r) = aud(r-1) +1$ and, if $\rho_W(r)>0$ then
$p_\VRF(r) = p_\VRF(r-1) + \am \left(\rho_F(r) / \rho_W(r)-\tau\right)$ and
$p_\VRF(r) = \min\{1, p_\VRF+\am\}$ otherwise.
\item [(1)] For each worker $i\in F$: $v_i(r) = v_i(r-1)$,  $\beta_i(r)=\beta_i(r-1)+0.1$ and $streak_i(r)=0$ and, if $i$ is rational, then $p_{Ci}(r) = p_{Ci}(r-1)  - \aw(a_i + \Cp)$.
\item [(2)] 
For each worker $i\notin F$: $v_i(r) = v_i(r-1) + 1$, $\beta_i(r)=\beta_i(r-1)\cdot 0.95$ and $streak_i(r)=streak_i(r-1)+1$ and, if $i$ is rational, then $p_{Ci}(r) = p_{Ci}(r-1) + \aw(a_i - (\SW-\Ct))$.
\end{itemize}
\item With probability $(1-p_\VRF(r-1)) P_F(r)$, the master does not audit when the set of cheaters is $F$. 
Then, according to Algorithms~\ref{alg3} and \ref{alg2}, the following updates are carried out.
\begin{itemize}
\item[]
For the master: $p_\VRF(r) = p_\VRF(r-1)$ and $aud(r) = aud(r-1)$.
\item[] For each worker $i\in W$: $v_i(r) = v_i(r-1)$.
\item[] For each rational worker $i\in F$,
\begin{itemize}
\item [(3)] if $\rho_F(r) >\rho_{W\setminus F}(r)$ then $p_{Ci}(r) = p_{Ci}(r-1) + \aw(\SW - a_i)$, 
\item [(4)] if $\rho_F(r) <\rho_{W\setminus F}(r)$ then $p_{Ci}(r) = p_{Ci}(r-1) - \aw\cdot a_i$,
\end{itemize}
\item[] For each rational worker $i\notin F$, 
\begin{itemize}
\item [(5)] if $\rho_F(r) >\rho_{W\setminus F}(r)$ then $p_{Ci}(r) = p_{Ci}(r-1) + \aw(a_i + \Ct)$,
\item [(6)] if $\rho_F(r) <\rho_{W\setminus F}(r)$ then $p_{Ci}(r) = p_{Ci}(r-1) + \aw(a_i - (\SW-\Ct))$.
\end{itemize}
\end{itemize}
\end{itemize}

\noindent Recall that in case of a tie in the weighted majority, the master flips a coin to choose one of the answers, and assigns payoffs accordingly. If that is the case, transitions (3)--(6) apply according 
to that outcome.


\paragraph{\bf \em Conditions for Eventual Correctness} \noindent
We show now the conditions under which the system can guarantee eventual correctness. 
The analysis is carried out for a universal class of reputation functions characterized by two properties. Property~1 states that if the master audits in consecutive rounds, the aggregated reputation of the honest workers will be larger than that of cheater workers in a bounded number of rounds. Property~2 states that if the aggregated reputation of a set $X\subset W$ is larger than that of a set $Y\subset W$, then it remains so if the master audits and all workers are honest. The two properties are formally stated below.
\begin{enumerate}
\item[]
\label{p:limit}
\textbf{Property 1:} 
For any constant $\delta > 0$, there is a bounded value $\gamma(\delta)$ such that, for any non-empty $X\subseteq W$ and any initial state $s_r$ in which $v_i(r)=0, \forall i \notin X$, if the Markov chain evolves in such a way that $\forall k= 1,\ldots, \gamma(\delta)$, it holds that $aud(r+k)=aud(r)+k$, $\forall i \in X, v_i(r+k)=v_i(r)+k$ and $\forall j \in W \setminus X, v_j(r+k)=v_j(r)$, then $\rho_X(r+\gamma(\delta))>\delta \cdot \rho_{W \setminus X}(r+\gamma(\delta))$.\vspace{.3em}
\item[]
\label{p:reputable}
\textbf{Property 2:} For any $X\subset W$ and $Y\subset W$, if  $aud(r+1)=aud(r)+1$ and $\forall j \in X \cup Y$ it is $v_j(r+1)=v_j(r)+1$ then $\rho_X(r)>\rho_Y(r) \Rightarrow \rho_X(r+1)>\rho_Y(r+1)$.\label{reputableProperty}
\end{enumerate}

Reputations \typeA , \typeB and \typeD (cf.~Section~\ref{rep:model})
satisfy Property~1, while reputation \typeC as defined does not. However, if a constant upper bound in the value of $\beta_i(r)$ is established, we obtain an adapted version of \typeC reputation that satisfies Property~1.
Regarding Property~2, while reputation \typeB satisfies it, reputation \typeA, \typeC and \typeD do not. 
The proofs of these facts are presented below. 
Moreover, as we show below (Theorem~\ref{thm}), this \emph{makes a difference} with respect 
to guaranteeing eventual correctness.

\begin{lemma}
Property 1 holds for reputation \typeA , while Property 2 does not . 
\end{lemma}

\begin{proof}
First we show that Property 1 holds. 
Consider any $d>0$, any $X \subseteq W$ non empty. Without loss of generality assume $|X|=k$. 
Consider rounds $r+1, \ldots r+j$, for some $j$, such that the master audits, workers in $X$ are honest and workers not in $X$ cheat. 
For $\forall i \in X$, $\rho_i(r+j)=\frac{v_i(r)+j+1}{aud(r)+j+2}$; and $\forall i \notin X$, $\rho_i(r+j)=\frac{v_i(r)+1}{aud(r)+j+2}$.
Then $\rho_X(r+j)=\sum_{i\in X} \rho_i(r+j)= \frac{\sum_{i \in X}v_i(r)}{aud(r)+j+2}+\frac{k(j+1)}{aud(r)+j+2}\geq\frac{j+1}{aud(r)+j+2}$; and $p_{W\setminus X}(r+j)=\sum_{i\in W \setminus X}\rho_i(r+j)=\frac{(n+k)}{aud(r)+j+2}\leq \frac{n+1}{aud(r)+j+2}$. For any $j\leq \delta(n-1)$ we have, $\rho_X(r+j)\geq \frac{j+1}{aud(r)+j+2}>\frac{\delta(n-1)}{aud(r)+j+2}\geq\rho_{W\setminus X}(r+j)$. Hence, setting $\gamma(\delta)=\delta(n-1)$ proves the first part of the claim. 
\\
\par
We now show that Property 2 does not hold. 
Consider any round where $aud(r+1)=aud(r)+1$ and $\forall j \in X \cup Y$ $v_j(r+1)=v_j(r)+1$. Without lose of generality assume that in state $s_r$, $|X|=k_r$. Then we have that if, 
$$
\mbox{\small $
\rho_X(r)>\rho_Y(r)$}$$$$
\mbox{\small $
\sum_{i\in X}\frac{v_i(r)+1}{aud(r)+2}>\sum_{i\in Y}\frac{v_i(r)+1}{aud(r)+2}$}$$$$
\mbox{\small $
\frac{\sum_{i\in X}v_i(r)}{aud(r)+2}+\frac{k_r}{aud(r)+2}>\frac{\sum_{i\in Y}v_i(r)}{aud(r)+2}+\frac{n-k_r}{aud(r)+2}$}$$$$
\mbox{\small $
\sum_{i\in X}v_i(r)+k_r>\sum_{i\in Y}v_i(r)+n-k_r
$}
$$
then,
$$
\mbox{\small $
\rho_X(r+1)>\rho_Y(r+1)$}$$$$
\mbox{\small $
\sum_{i\in X}\frac{v_i(r)+2}{aud(r)+3}>\sum_{i\in Y}\frac{v_i(r)+2}{aud(r)+3}$}$$$$
\mbox{\small $
\frac{\sum_{i\in X}v_i(r)}{aud(r)+3}+\frac{2k_{r+1}}{aud(r)+3}>\frac{\sum_{i\in Y}v_i(r)}{aud(r)+3}+\frac{2(n-k_{r+1})}{aud(r)+3}$}$$$$
\mbox{\small $
\sum_{i\in X}v_i(r)+2k_{r+1}>\sum_{i\in Y}v_i(r)+2(n-k_{r+1})
$}
$$

Thus if $k_r \neq k_{r+1}$ then the entailment may not hold. 
\end{proof}

\begin{lemma}
Property 1 and 2 hold for reputation \typeB . 
\end{lemma}

\begin{proof}First we show that Property 1 holds. 
Consider any $\delta>0$, any $X \subseteq W$ non empty. Without loss of generality assume $|X|=k$. Consider rounds $r+1, \ldots r+j$, for some $j$, such that the master audits, workers in $X$ are honest and workers not in $X$ cheat. 
For $\forall i \in X$, $\rho_i(r+j)= \varepsilon^{aud(r)-v_i(r)}$ and $\forall i \notin X$, $\rho_i(r+j)=\varepsilon^{aud(r)+j-v_i(r)}$.Then $\rho_X(r+j)= \sum_{i\in X}\rho_i(r+j)= \sum_{i\in X}\varepsilon^{aud(r)-v_i(r)}\geq \varepsilon^{aud(r)}$ and $\rho_{W\setminus X}(r+j)=\sum_{i\in W\setminus X} \rho_{i}(r+j)=\sum_{i\in W\setminus X} \varepsilon^{aud(r)+j-v_i(r)}\leq (n-1)\varepsilon^{aud(r)+j}$. For any $j< -log(\delta(n-1))$ we have, 
$\rho_X(r+j)\geq \varepsilon^{aud(r)} > \delta (n-1)\varepsilon^{aud(r)+j} \geq \rho_{W\setminus X}(r+j)$. Hence, setting $\gamma(\delta)<-log(\delta(n-1))$ proves the claim. 
\\
\par
Now we show that also Property 2 holds. 
Consider any $X \subset W$ and $Y \subset W$. Then if $\rho_X(r)= \sum_{i\in X} \varepsilon^{aud(r)-v_i(r)}$ then $\rho_X(r+1)= \sum_{i\in X} \varepsilon^{aud(r)+1-v_i(r)-1}= \sum_{i\in X} \varepsilon^{aud(r)-v_i(r)}$. If $\rho_Y(r)= \sum_{i\in Y} \varepsilon^{aud(r)-v_i(r)}$ then $\rho_Y(r+1)= \sum_{i\in Y} \varepsilon^{aud(r)+1-v_i(r)-1}= \sum_{i\in Y}\varepsilon^{aud(r)-v_i(r)}$. Thus trivially the condition $\rho_X(r)>\rho_Y(r)\Rightarrow	\rho_X(r+1)>\rho_Y(r+1)$ holds. 
\end{proof}

\begin{lemma}
Property 1 does not hold for reputation \typeC, unless an upper bound $b$ of the value $\beta_i(r)$ is established;Property 2 does not hold for reputation \typeC . 
\end{lemma}

\begin{proof} First we show that Property 1 does not hold. 
Consider any $\delta>0$, and $X \subseteq W$ non empty. Without loss of generality assume $|X|=k$. Consider rounds $r+1 \ldots r+j$ for some $j$, such that master audits, workers in $X$ are honest and workers not in $X$ cheat. 
For $\forall i \in X$ if $\beta_i(r+j)>0.05$ then $\rho_i(r+j)=0$ else $\rho_i(r+j)=1-\sqrt{\frac{\beta_i(r)\times j\times 0.95}{0.05}}$. For $\forall i \in W \setminus X$ then $\beta_i(r+j)>0.05$ thus $\rho_i(r+j)=0$. If $\gamma(\delta)>0$ then $\forall i \in W \setminus X$ $\rho_i(r+j)=0$.To have $\rho_X(r+\gamma(\delta))> \delta \rho_{W \setminus X}(r+\gamma(\delta))$ we need to know the state $s_r$ where the master starts to audit to know in how many rounds $\exists i \in X$ where $\beta_i(r)\leq 0.05$. Thus the condition of  Property 1 for \typeC depends on the current state $s_r$ where the master begins to audit. Hence Property 1 for reputation \typeC does not hold.  
\\
\par
Now, we show that if an upper bound $b$ of the value $\beta_i(r)$ is established then Property 1 holds for reputation \typeC. 
Consider any $\delta>0$, and $X \subseteq W$ non empty. Without loss of generality assume $|X|=k$. Consider rounds $r+1 \ldots r+j$ for some $j$, such that master audits, workers in $X$ are honest and workers not in $X$ cheat. 
For $\forall i \in X$ if $\beta_i(r+j)>0.05$ then $\rho_i(r+j)=0$ else $\rho_i(r+j)=1-\sqrt{\frac{\beta_i(r)\times j\times 0.95}{0.05}}$. For $\forall i \in W \setminus X$ then $\beta_i(r+j)>0.05$ thus $\rho_i(r+j)=0$. If $\gamma(\delta)>0$ then $\forall i \in W \setminus X$ $\rho_i(r+j)=0$. 
If a constant upper bound $b$ in the value of $\beta_i(r)$ is established in the $s_r$ state then in $j<\frac{0.05}{b\times 0.95}$ rounds $\forall i \in X$, $\rho_i(r+j)>0$ and thus the condition $\rho_X(r+\gamma(\delta))> \delta \rho_{W \setminus X}(r+\gamma(\delta))$ becomes true by setting $\gamma(\delta)<\frac{0.05}{b \times 0.95}$ and the claim is proved. 
\\
\par
Finally, we show that Property 2 does not hold. 
Consider any $X \subset W$ and $Y \subset W$. If $\rho_X(r)>\rho_Y(r)$ then in the next state $s_{r+1}$ the following apply. For $\forall j \in X \setminus Y$, $v_j(r+1)=v_j(r)+1$ (they are honest). Thus if $\beta_j(r+1)>0.05$ then $\rho_j(r+1)=0$ else $\rho_j(r+1)=1-\sqrt{\frac{\beta_j(r)\times 0.95}{0.05}}$. Consider the following case where $\forall j \in Y$, $\rho_j(r)=0$ and $\forall j \in X'$, where $X'$ is the set of all workers in $X$ besides one, lets name it $z$, $\rho_j(r)=0$ and $\rho_z(r)=1-\sqrt{\frac{\beta_z(r)\times 0.95}{0.05}}$. Then $\rho_X(r)>\rho_Y(r)$ holds. In the next round though there is a possibility that $\forall j \in Y$, $\rho_j(r+1)=1-\sqrt{\frac{\beta_j(r+1)\times 0.95}{0.05}}$, while only the reputation of $z$ changes in the next state for the $X$ set. Thus if $ 1-\sqrt{\frac{\beta_z(r)\times 0.95}{0.05}}<|Y|\times(1-\sqrt{\frac{\beta_j(r+1)\times 0.95}{0.05}})$ then Property 2 does not hold for reputation \typeC and the claim is proved.     
\end{proof}

\begin{lemma}
Property 1 holds for reputation \typeD, while Property 2 does not hold when $n>2$.
\end{lemma}

\begin{proof}
First we show that Property 1 holds. 
Consider any $\delta>0$, and $X \subseteq W$ non empty. Without loss of generality assume $|X|=k$. Consider rounds $r+1 \ldots r+j$ for some $j$, such that master audits, workers in $X$ are honest and workers not in $X$ cheat. 
For $\forall i \in X$  we have $\rho_i(r+j)=1-\frac{1}{streak_i(r+j)}$ if $streak_i(r+j)\geq 10$. For $\forall  i \notin X$ $\rho_i(r+j)=0$ since $streak_i(r+j)=0$. Thus, $\rho_X(r+j)=\sum_{i\in X}\rho_i(r+j)=\sum_{i\in X}(1-\frac{1}{streak_i(r+j)})=k(1-\frac{1}{streak_i(r+j)})>\rho_{W \setminus X}(r+j)=\sum_{i \in W \setminus X}\rho_i(r+j)=0$. 
Thus, in order for $\rho_X(r+\gamma(\delta))>\delta\rho_{W\setminus X(r+\gamma(\delta))}$ to be true we need to set $\gamma(\delta)\geq 10$.
\\
\par
We now show that Property 2 does not hold when $n>2$.
Consider any round where $aud(r+1)=aud(r)+1$ and $\forall j \in X \cup Y$ $v_j(r+1)=v_j(r)+1$. 
Then, for any $X \subset W$ and $Y \subset W$ it must hold that if $\rho_X(r)>\rho_Y(r)$ then $\rho_X(r+1)>\rho_Y(r+1)$ .
Thus, consider state $s_r$ where $|X|=1$, $|Y|=n-1$ and $\forall i \in X$ $streak_i(r) \geq 10$ and $\forall i \in Y$ $streak_i(r)=9$. 
It is true that $\rho_X(r)=1-\frac{1}{streak_i(r)}=1-\frac{1}{10+\delta}$ $>\rho_Y(r)=(n-1)\cdot 0=0$ where $\delta>1$. 
Now, in the next state $s_{r+1}$ we have $\rho_X(r+1)=1-\frac{1}{streak_i(r+1)}=1-\frac{1}{10+\delta+1}<1$ and $\rho_Y(r+1)=(n-1)(1-\frac{1}{10})$ $ 	\Rightarrow  n-2 <\rho_Y(r+1)<n-1$. If $n>2$ then $\rho_X(r+1)<\rho_Y(r+1)$ and the claim does not. Thus Property 2 does not hold for reputation \typeD when $n>2$.   
\end{proof}

Moving on we present the conditions under which the system can
guarantee eventual correctness, but before that we establish the 
terminology that will be used throughout. For any given state $s$, a set $X$ of workers is called a \emph{reputable set} if 
$\rho_X(r)>\rho_{W\setminus X} (r)$.
In any given state $s$, let a worker $i$ be called an \emph{honest worker} if $p_{Ci}(s)=0$.
Let a state $s$ be called a \emph{\trfl state} if the set of honest workers in state $s$ is reputable.
Let a \emph{\trfl set} be any set of \trfl states. 
Let a worker be called a \emph{covered worker} if the payoff of returning the correct answer is at least its aspiration plus the computing cost. I.e., for a covered worker $i$, it is $\SW \geq a_i +\Ct$. 
We refer to the opposite cases as \emph{uncovered worker} ($\SW < a_i +\Ct$), \emph{cheater worker} 
($p_{Ci}(s)=1$), \emph{\untrfl state} 
(the set of cheaters in that state is reputable), and \emph{\untrfl set}, respectively.
Let a set of states $S$ be called \emph{closed} if, once the chain is in any state $s\in S$, it will not move to any state $s'\notin S$. (A singleton closed set is called an \emph{absorbing} state.) For any given set of states $S$, we say that the chain \emph{reaches} (resp. \emph{leaves}) the set $S$ if the chain reaches some state $s\in S$ (resp. reaches some state $s\notin S$). 


In the master's algorithm, a non-zero probability of auditing is always guaranteed. This is a necessary condition. Otherwise, unless the altruistic workers outnumber the rest, a closed untruthful set is reachable, as we show in Lemma~\ref{pA0b_ap}. 

\begin{lemma} 
\label{pA0b_ap}
Consider any set of workers $Z\subseteq W$ such that $\SW> a_i$, for every rational worker $i\in Z$.
Consider the set of states $$S=\{s | (p_\VRF(s)=0) \land (\forall w\in Z : p_{Cw}(s)=1) $$ $$ \land 
(\rho_Z(s) > \rho_{W-Z}(s)) \}.$$ Then,
\begin{itemize} 
\item [\emph{(i)}] $S$ is a closed untruthful set, and 
\item [\emph{(ii)}] if
$p_\VRF(0)=0$,
$\rho_Z(0) > \rho_{W-Z}(0)$, and
for all $i\in Z$ it is $p_{Ci}(0)>0$,  
then, $S$ is reachable.
\end{itemize}
\end{lemma}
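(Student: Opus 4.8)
The plan is to treat the two claims separately, using throughout the single structural fact that reputation is additive and non-negative, so that $Z \subseteq F$ forces both $\rho_Z \le \rho_F$ and $\rho_{W\setminus F} \le \rho_{W\setminus Z}$. For part (i) I would first show that every state in $S$ is untruthful. Fix $s \in S$; since each $w\in Z$ has $p_{Cw}(s)=1$, every realization of the stochastic cheater set $F$ produced from $s$ satisfies $Z\subseteq F$, and combining the monotonicity above with the defining inequality $\rho_Z(s)>\rho_{W-Z}(s)$ gives $\rho_F \ge \rho_Z > \rho_{W-Z} \ge \rho_{W\setminus F}$, so $F$ is reputable and $s$ is untruthful. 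To show $S$ is closed I would observe that $p_\VRF(s)=0$ forces the non-audit branch of the transition function, under which $aud$ and every $v_i$ are left unchanged; hence all reputations are frozen and both $p_\VRF(r)=0$ and $\rho_Z(r)>\rho_{W-Z}(r)$ persist. It then remains to check that each $w\in Z$ keeps $p_{Cw}=1$: malicious workers keep $p_C=1$ by definition, while for a rational $w\in Z$ we have $w\in F$ with $F$ reputable, so transition case (3) applies and yields $p_{Cw}(r)=\min\{1,\,1+\aw(\SW-a_w)\}=1$ because $\SW>a_w$. Thus $s_r\in S$ and $S$ is a closed untruthful set.

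For part (ii) the key observation is that $p_\VRF(0)=0$ makes the master never audit, so $aud$ and all $v_i$ remain at their initial values forever; consequently every reputation is constant and the hypothesis $\rho_Z(0)>\rho_{W-Z}(0)$ guarantees that $Z$ stays reputable in every reachable state. I would then exhibit a single positive-probability trajectory that drives all rational workers in $Z$ to $p_C=1$. In each round consider the event ``every worker in $Z$ cheats'', whose probability equals $\prod_{i\in Z,\ \text{rational}} p_{Ci}\cdot\prod_{i\in Z,\ \text{malicious}} 1$ and is strictly positive as long as the rational cheating probabilities are positive (true at round $0$ by hypothesis, and only increasing along the path). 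Conditioned on this event we have $Z\subseteq F$ with $F$ reputable, so every rational $i\in Z$ follows case (3) and increases $p_{Ci}$ by the fixed positive amount $\aw(\SW-a_i)$. After at most $\max_{i}\lceil 1/(\aw(\SW-a_i))\rceil$ such rounds every rational worker in $Z$ is capped at $p_{Ci}=1$ (malicious ones being already there), and the resulting state lies in $S$. Since each step of this trajectory has conditional probability bounded below by a fixed positive constant, the finite product is positive and $S$ is reachable.

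The bookkeeping with monotonicity and the case analysis of the transition function are routine; the one point that needs care is making the trajectory in part (ii) genuinely consistent, namely verifying that conditioning on ``$Z$ cheats'' never triggers cases (4) or (6) (which would \emph{decrease} some $p_{Ci}$) and that the counts, and hence reputations, truly stay frozen so that $Z$ never loses reputable status. Both follow from $p_\VRF=0$ (no audits, frozen counts) together with the inclusion monotonicity of reputation, so crucially no appeal to the specific reputation type is needed here and neither Property~1 nor Property~2 enters the argument; that independence from the reputation function is precisely what makes this lemma a clean justification for imposing $p_\VRF^{min}>0$ in the mechanism.
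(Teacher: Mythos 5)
Your proof is correct and follows essentially the same route as the paper's: part (i) rests on the fact that $p_\VRF=0$ freezes audits, counts and reputations while transition (3) keeps every $p_{Cw}$ at $1$, and part (ii) exhibits a positive-probability trajectory in which all of $Z$ cheats each round so that transition (3) pushes each rational worker's $p_{Ci}$ up to $1$. The only differences are cosmetic: you spell out the additivity/monotonicity of aggregated reputation and an explicit round-count bound, whereas the paper states in so many words that the hypothesis $p_{Ci}(0)>0$ excludes altruistic workers from $Z$ --- a fact your product formula for the cheating event uses implicitly.
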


\begin{proof}

\emph{(i)} Each state in $S$ is untruthful, since the workers in $Z$ are all cheaters and $Z$ is a reputable set. 
Since $p_\VRF=0$, the master never audits, and the reputations are never updated. 
From transition (3) it can be seen that, if the chain is in a state 
of the set $S$ before round $r$, for each worker $i \in Z$, it holds $p_{Ci}(r) \geq p_{Ci}(r-1) =1$.  
Hence, once the chain has reached a state in the set $S$, it will move only to states in the set $S$. Thus, $S$ is a closed untruthful set.

(\emph{ii}) We show now that $S$ is reachable from the initial state under the above conditions. Because $p_\VRF$ and the reputations only change when the master audits, we have that $p_\VRF(0)=0  \Longrightarrow p_\VRF(s)=0$ and $\rho_Z(0) > \rho_{W-Z}(0) \Longrightarrow \rho_Z(s) > \rho_{W-Z}(s)$, for any state $s$. 
Malicious workers always have $p_C=1$, and no altruistic worker may be contained in $Z$ because $p_{Ci}(0)>0$ for all $i\in Z$. Thus, to complete the proof it is enough to show that eventually it is $p_C=1$ for all the workers in L, which is the set of rational workers in $Z$.
Given that for each rational worker $j\in L$, $p_{Cj}(0)>0$ and $\SW> a_j$, from transition (3) it can be seen that there is a non-zero probability of moving from $s_0$ to a state $s_1$ where the same conditions apply and $p_{Cj}(1)>p_{Cj}(0)$ for each rational worker $j\in L$. Hence, applying the argument inductively, there is a non-zero probability of reaching $S$.
\end{proof}

Eventual correctness follows if we can show that the Markov chain always ends in a closed \trfl set, with $p_\VRF=p_\VRF^{min}$.
We prove first that having at least one worker that is altruistic or covered rational is necessary for a closed \trfl set to exist. Then we prove that it is also sufficient if all rational workers are covered.

\begin{lemma}
\label{l:nocovered}
If all workers are malicious or uncovered rationals, no \trfl set $S$ is closed, if the reputation type satisfies Property~2.
\end{lemma}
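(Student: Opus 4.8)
The plan is to argue by contradiction: assume some truthful set $S$ is closed and exhibit, from a state of $S$, a positive-probability one-round transition to an untruthful state, which cannot lie in $S$. Fix any $s\in S$ and let $H$ be its set of honest workers; by definition of a truthful state, $H$ is reputable. Since malicious workers always have $p_C=1$ and, by hypothesis, there are no covered rational workers, every worker of $H$ is an \emph{uncovered rational} worker, i.e. $\SW<a_i+\Ct$ for all $i\in H$. I would then consider the round in which (i) the master audits, which has probability at least $p_\VRF^{min}>0$; (ii) every worker in $H$ is honest, which is forced since they have $p_C=0$; and (iii) every worker in $W\setminus H$ cheats, which has positive probability because each such worker is either malicious or rational with $p_{Ci}>0$. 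Along this path the set of cheaters is exactly $F=W\setminus H$, so the workers of $H$ are precisely the validated ones. By closedness, the resulting state $s_1$ is again in $S$ and hence truthful; I will contradict this.

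The key observation is that being uncovered \emph{penalizes honesty}. By transition~(2), each $i\in H$ moves to $p_{Ci}(1)=\aw\big(a_i-(\SW-\Ct)\big)>0$, so after this round no worker of $H$ is honest any more; hence the honest set $H'$ of $s_1$ satisfies $H'\cap H=\emptyset$, i.e. $H'\subseteq W\setminus H$. Moreover, in an audited round the only update that can drive a worker's cheating probability down to $0$ is the caught-cheater update (transition~(1)), so every worker of $H'$ must have cheated this round and was therefore \emph{not} validated; its validation count stayed fixed while $aud$ increased. In the generic case no caught cheater clamps exactly to $0$, so $H'=\varnothing$ and $s_1$ is immediately untruthful; but one cannot rely on this, since a caught cheater whose punishment clamps its probability to $0$ re-enters the honest set. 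The argument must therefore control the reputation of $H'$ rather than its emptiness.

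It remains to show $H'$ is not reputable. This is where Property~2 enters. At $s$ we have $\rho_H>\rho_{W\setminus H}$, and in this round every worker of $H$ is validated while no worker of $W\setminus H$ is, so the order preservation secured by Property~2 (validating the dominant set while not validating the other can only reinforce the inequality) gives $\rho_H(1)>\rho_{W\setminus H}(1)$. Since $H'\subseteq W\setminus H$ and reputations are additive and nonnegative, $\rho_{H'}(1)\le\rho_{W\setminus H}(1)<\rho_H(1)$, and since $H\subseteq W\setminus H'$ we obtain $\rho_{W\setminus H'}(1)\ge\rho_H(1)>\rho_{H'}(1)$; thus $H'$ is not reputable and $s_1$ is untruthful, contradicting $s_1\in S$. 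The step I expect to be the main obstacle is exactly this reputation bookkeeping: Property~2 is literally stated for the case in which \emph{both} compared sets are validated, whereas here $W\setminus H$ is not validated, so the write-up must either derive the one-sided order preservation from Property~2 together with the monotonicity of reputation in the validation count, or recast the comparison ($H$ against $H'$ across the audit) so that Property~2 applies verbatim. This is precisely the step that fails for reputation types violating Property~2, which is consistent with the hypothesis of the lemma.
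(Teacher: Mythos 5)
Your proof is correct and follows essentially the same route as the paper's: let the master audit (probability at least $p_\VRF^{min}>0$), observe via transition (2) that every uncovered honest worker's cheating probability becomes strictly positive, and conclude that the new honest set, being contained in the former cheater set whose aggregate reputation stays below that of the old honest set, cannot be reputable---you are in fact more careful than the paper, which glosses over the caught cheaters whose probability clamps to zero. The obstacle you flag about Property 2 being stated only for the case where both compared sets are validated is a genuine imprecision, but it is one the paper itself commits: its proof invokes Property 2 in exactly this one-sided way (honest set validated, complement not), so your argument matches the paper's, imprecision included.
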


\begin{proof}
Let us consider some state $s$ of a \trfl set $S$. Let $Z$ be the set of honest workers in $s$. Since $s$ is \trfl,
then $Z$ is reputable.
Since there are no altruistic workers, the workers in $Z$ must be uncovered rational. Let us assume that being in state $s$ the master audits in round $r$.
From Property~2, since all nodes in $Z$ are honest in $r$, $Z$ is reputable after $r$. From transition (2), after round $r$, each worker $i \in Z$ has
$p_{Ci}(r) >0$. Hence, the new state is not \trfl, and $S$ is not closed. 
\end{proof}

\begin{lemma}
\label{l:onecovered}
Consider a reputation type that satisfies Properties~1 and 2.
If all rational workers are covered and at least one worker is altruistic or rational, a closed \trfl set $S$ is reachable from any initial state. Moreover, in every state $s\in S$, $p_\VRF(s) = p_\VRF^{min}$.
\end{lemma}

\begin{proof}
Let $X$ be the set of altruistic and rational workers, and consider any initial state $s_r$. 
Let us define a constant $\delta=\max\{1, (1-\tau+\eta/\am)/(\tau+\eta/\am)\}$, for a fixed constant $\eta \in (0,\tau \am)$.
We consider the following cases.
\begin{enumerate}
\item
In state $s_r$ not all the workers in $X$ are truthful. 
Let us assume then that in the next $\lceil \frac{1}{\aw(a_j - \Cp)} \rceil$ rounds the master audits and any
worker $i$ that has $p_{Ci}>0$ in the round cheats. Then, from transition (1) and the fact that all rational workers are covered, 
after these $\lceil \frac{1}{\aw(a_j - \Cp)} \rceil$ rounds all the workers in $X$ are truthful. Then, we end up in one of the 
following three cases.\vspace{.3em}
\item
In state $s_r$ all the workers in $X$ are truthful, and 
$\rho_X(r) \leq \delta \cdot \rho_{W \setminus X}(r)$.
Consider the value $\gamma(\delta)$ given in Property 1. 
Assume that in each of the following $\gamma(\delta)$ rounds the master audits. The workers in $W\setminus X$ are malicious, hence, it holds that $\forall \notin X:v_i(r)=0$.
Then, in these rounds all workers in $X$ are honest (every worker in $X$ remains truthful from transition (2) and the fact that all rational workers are covered) and 
all workers in $W\setminus X$ cheat because they are malicious. Therefore, it holds that $\forall i\notin X: \forall j \in [r,r+\gamma(\delta)]: v_i(j)=0$.
Then, from Property 1, after the $\gamma(\delta)$ rounds we have
that $\rho_X(r+\gamma(\delta))>\delta \cdot \rho_{W \setminus X}(r+\gamma(\delta))$. Then, we are in one of the following two cases.\vspace{.3em}
\item 
In state $s_r$ all the workers in $X$ are truthful, 
$\rho_X(r) > \delta \cdot \rho_{W \setminus X}(r)$, and $p_\VRF(r) > p_\VRF^{min}$.
Let us assume that in the next $\lceil p_\VRF(r)/\eta \rceil$ rounds the master audits. Then, as in the previous case, in these 
rounds all workers in $X$ are honest 
and all workers in $W\setminus X$ cheat. 
Hence, the property that $\rho_X(r+k) > \delta \cdot \rho_{W \setminus X}(r+k)$ holds for each round $r+k$, for $k=1, \ldots, \lceil p_\VRF(r)/\eta \rceil$. Then, by the definition of $\delta$ and the update of $p_\VRF$, in each round $p_\VRF$ is decremented by
$\eta$ (more precisely, by $\min\{\eta,p_\VRF\}$). Hence, by round $r+\lceil p_\VRF(r)/\eta \rceil$ it holds that $p_\VRF = p_\VRF^{min}$. Then, we are in the following case.\vspace{.3em}
\item 
In state $s_r$ all the workers in $X$ are truthful, 
$\rho_X(r) > \delta \cdot \rho_{W \setminus X}(r)$, and $p_\VRF(r) = p_\VRF^{min}$.
Then, all subsequent states satisfy all these properties, and define the set $S$, independently of whether
the master audits or not (from transition (2) and (6), the fact that $\delta \geq 1$, Property~2, and the update of $p_\VRF$).
This complete the proof.
\end{enumerate}
\end{proof}

Now, combining Lemmas~\ref{l:nocovered}~and~\ref{l:onecovered} we obtain the following theorem. 

\begin{theorem}
\label{thm}
In a system where (1) the reputation type used satisfies Properties~1 and 2, and (2) all rational workers are covered, having at least one altruist or rational worker is a necessary and sufficient condition to guarantee eventual correctness. That is, from any initial state, to eventually reach a closed \trfl set $S$ where the master audits with probability  $p_\VRF^{min}$.
\end{theorem}

If there is no knowledge on the distribution of the workers among the three types (altruistic, malicious and rationals), the only strategy to make sure eventual correctness is achieved, if possible, is to cover all workers. Of course, if {\em all} workers are malicious there is no possibility of reaching eventual correctness.



\newcommand{\myboldmath}{}
\newcommand{\defn}[1]           {{\textit{\textbf{\myboldmath #1}}}}
\section{Simulations}

Our analysis has shown that reaching eventual correctness is feasible under certain conditions. Once the system enters a state of eventual correctness we are in an optimal state where the master always receives the correct task reply by auditing with a minimum probability. 
What is left to be clarified is under which cost eventual correctness is reached. Cost can be measured in terms of 1) reliability, 2) auditing, 3) payment to the workers and  4) time until eventual correctness is reached. Under these parameters we provide a comparison of the system's performance under the different reputation types and we are able to identify the scenarios under which every reputation type is performing best. 

We present simulations for a variety of parameter combinations similar to the values observed in real systems (extracted from~\cite{einstein,emBoinc}). We have designed our
own simulation setup by implementing our mechanism (the master\rq{}s and the workers\rq{} algorithms, including the four types
of reputation discussed above) using C++. The simulations were contacted on a dual-core AMD Opteron 2.5GHz processor,
with 2GB RAM running CentOS version 5.3.

\subsection*{\bf\em General setting}
\noindent
We consider a total of 9 workers as an appropriate degree of redundancy to depict the changes that different ratios of rational, altruistic or malicious workers will induce in the system. SETI-like systems usually use three workers, but using such a degree of redundancy would not allow us to present a rich account of the system's evolution. Additionally, by selecting 9 redundant workers we are able to capture systems that are more critical and aim at a higher degree of redundancy. The chosen parameters are indicated in the figures. As for the intrinsic parameter of the aspiration level we consider for simplicity of presentation that all workers have the same aspiration level $a_i= 0.1$; although we have checked that with random values the results are similar to those presented here, provided their variance is not very large. We set the learning rate to a small constant value, as it is discussed in~\cite{RLbook} (called step-size there), this is the general conversion when a learning process is assumed. Thus we consider the same learning rate for the master and the workers, i.e., $\alpha = \am = \aw=0.1$. We set $\tau=0.5$ (which means that the master will not tolerate a majority of cheaters), $p_\VRF^{min}=0.01$ and $\varepsilon=0.5$ in reputation \typeB.
We use $\SW=1$, as the normalization parameter for all the results presented. 
Finally, the presented results are an average of 10 executions of the implementation, unless otherwise stated (when we show the behavior of typical, individual realizations). Usually we choose to depict the evolution of $p_\VRF$ since it is an important measure of cost for the master. 
In all of the depicted results in all the Figures presented here we have verified that if $p_\VRF=p_\VRF^{min}$ then the system has already reached a state where the master receives always the correct reply, and hence eventual correctness is reached. By convention for clarity of presentation, we will simply say that the system has reached convergence once $p_\VRF=p_\VRF^{min}$. Finally, we define ${\sum_{i\in W} \rho_i S_i}/{|W|}$ as the reputation ratio. This quantity will allow us to see the overall reputation of the workers in the system and is indicative of the existence of honest workers with higher overall reputation than cheaters.



\begin{figure}[htbp]
\centering
\subfloat[][]{\includegraphics[width=.42\textwidth]{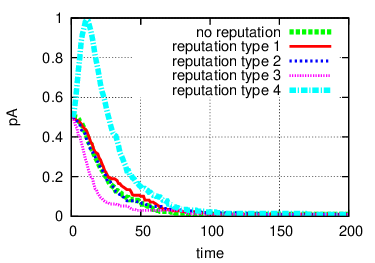}}
\qquad
\subfloat[][]{\includegraphics[width=.42\textwidth]{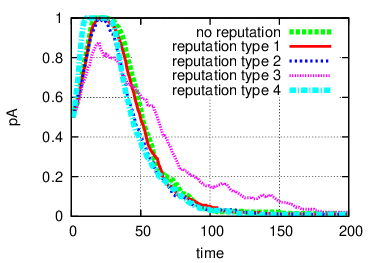}}
\caption{Rational workers. Auditing probability of the master as a function of time (number of rounds) for parameters $p_\VRF=0.5$, $\alpha=0.1$, $a_i=0.1$, $\tau=0.5$, $\SW=1$, $\Cp=0$ and $\Ct=0.1$. (a) initial $p_C=0.5$ (b) initial $p_C=1$.}
\label{pA-no-malicious}
\end{figure}

\begin{figure*}[htbp]
 \begin{minipage}{\textwidth}
\centering
\subfloat[][]{\includegraphics[width=.42\textwidth]{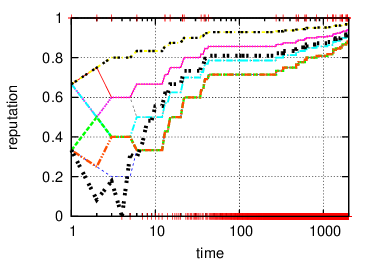}}
\subfloat[][]{\includegraphics[width=.42\textwidth]{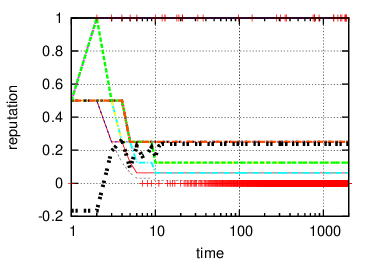}}
\end{minipage}
 \begin{minipage}{\textwidth}
\centering
\subfloat[][]{\includegraphics[width=.42\textwidth]{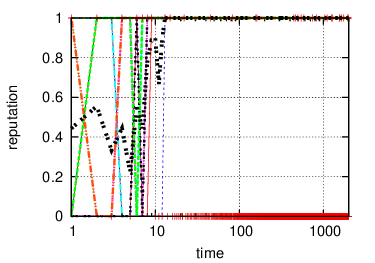}}
\subfloat[][]{\includegraphics[width=.42\textwidth]{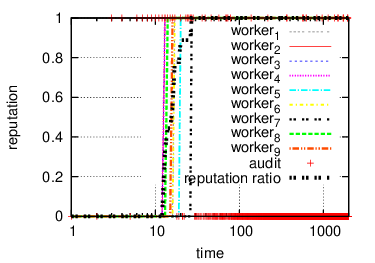}}
\end{minipage}
\caption{Rational workers, for an individual realization with initially $p_C=0.5$, $p_\VRF=0.5$, $\tau=0.5$, $\SW=1$, $\Ct=0.1$, $\Cp=0$, $\alpha=0.1$  and $a_i=0.1$. (a) reputation \typeA , (b) reputation \typeB , (c) reputation \typeC , (d) reputation \typeD .}
\label{rep-no-malicious}
\end{figure*}

   
We consider a variety of different scenarios where only rational workers exist in the computation or where the master decided to cover the aspired amount of payment to only a small number of workers. We also consider the case where more than one type of workers co-exist in the same computation. Finally we also consider the case where some workers change type after eventual correctness is reached. Under these rich account of difference scenarios we are able to compare the four reputation types and record the system' s behavior before eventual correctness is reached. Notice that in all the figures' legends we refer to reputation type \typeA as type~1, to type \typeB as type~2, to type \typeC as type~3 and finally to type \typeD as type~4. 

\subsection*{\bf\em Presence of only rational workers} In crowdsourcing systems like Amazon's Mechanical Turk the majority of workers participating in the platform are driven by monetary incentives, thus exhibiting a rational behavior where their goal is to maximize their profit. Hence, the presence of only rational workers is plausible in some real system examples. In this scenario we cover all the workers, that is, $\SW>a+\Ct$. 

Figure~\ref{pA-no-malicious} depicts the auditing probability of the master at each round for all 4 reputation types and the case where a mechanism without reputation is used (see~\cite{CCPE13}). Figure~\ref{pA-no-malicious} (a) shows the case where the rational workers linger between cheating or being honest in the first round of interaction by setting $p_C=0.5$. Also the master takes an approach of ignorance by setting $p_\VRF=0.5$ and not punishing the workers. Under this mild approach of the master in all 4 reputation types the system converges in roughly 100 rounds. While in the case where the master does not use reputation the system converges a bit earlier. On the other hand $p_\VRF$ at each round is the lowest for reputation \typeC while it is the highest for reputation \typeD . In particular for \typeD the $p_\VRF$ increases in the initial rounds before decreasing. This behavior is correlated with the fact that \typeD is a type that is based on a threshold, it needs 10 consecutive correct replies for a worker to increase her reputation from zero. This is also verified by the evolution of the reputation of \typeD in Figure~\ref{rep-no-malicious} (d). 
Reputation \typeC (Figure~\ref{rep-no-malicious}(c)), on the other hand, allows for dramatic increases and decreases of reputation. This is a result of the indirect way we calculate reputation \typeC , as we mentioned above.
Notice that in reputation \typeB (Figure~\ref{rep-no-malicious}(b)) reputation takes values between (0,0.3). This happens because  
when the master catches a worker cheating, its reputation decreases exponentially, never increasing again. 
Finally, reputation \typeA (Figure~\ref{rep-no-malicious}(a)) leads rational workers to reputation values close to 1 {(at a rate that depends on the value of the initial $p_C$)} since it is a linear function.

In Figure~\ref{pA-no-malicious} (b) now we assume that the workers are more aggressive towards the system starting with an initial  $p_C=1$ . In this case convergence comes roughly around 120-150 rounds for reputation \typeA , \typeB , \typeD and  the case of no reputation. For reputation 	\typeC convergence comes later, roughly in 200 rounds, but $p_\VRF$ until convergence is lower in the first 50 rounds.  

In general from Figure~\ref{pA-no-malicious} we can see that  the mechanism of~\cite{CCPE13} (without the reputation scheme) is enough to bring rational workers to produce the correct output, precisely because of their rationality. Thus if the master can be certain that only rational workers will take part in the computation is better to use the mechanism of~\cite{CCPE13}. But if such a knowledge is not available then selecting reputation \typeB is the best option. Reputation \typeC performs better when $p_C=0.5$ while it has a poor performance in the case where $p_C=1$. As for reputation \typeA it is always slightly under performing \typeB and reputation \typeD has a bad performance when $p_C=0.5$.

\begin{figure}[htbp]
\centering
\includegraphics[width=2.8in, trim = 1.1mm 0mm 2mm 2mm, clip]{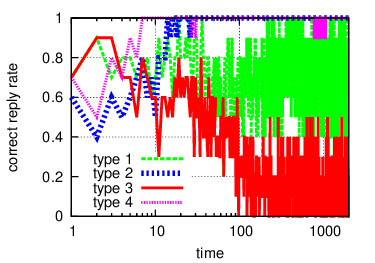}
\caption{Correct reply rate as a function of time in the presence of only rational workers. Parameters are initial $p_\VRF=0.5$,$\SW=1$, $\Ct=0.1$, $\Cp=0$ and $\alpha=0.1$, $a_i=0.1$, $\tau=0.5$.  Reputation types \typeA, \typeC and \typeD have initial $p_C=0.5$, while in \typeB , $p_C=1$.}
\label{fig4A}
\end{figure}

\subsection*{\bf\em Covering only a subset of  rational  workers} \noindent

In the previous paragraph we considered only cases where the master was covering all workers, that is, $\SW>a+\Ct$ for all workers. For the case with malicious workers, as explained in Section~\ref{rep:analysis}, this is unavoidable if the worker's type distribution is not known. 
But if we know that only rational workers exist then maybe by covering only a subset of them the system can reach eventual correctness, a scenario that we now explore. Covering only a subset of the rational workers will decrease the cost of the master in terms of payment but might actually increase the cost of auditing. This precisely is the relationship we want to explore.  
%

\begin{figure*}[htbp]
\begin{minipage}{\textwidth}
$\begin{array}{ccc}
\hspace{4em}
\includegraphics[width=2in, trim = 1.2mm 0mm 0mm 2mm, clip]{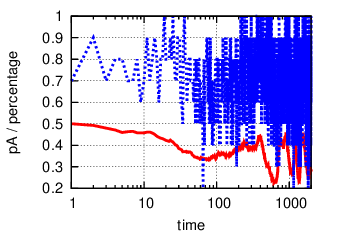}&
\includegraphics[width=2in, trim = 1.2mm 0mm 0mm 2mm, clip]{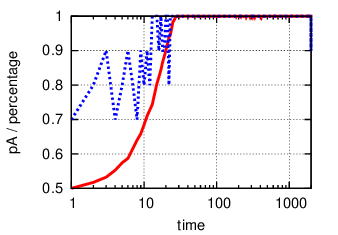}&
\includegraphics[width=2in, trim = 1.2mm 0mm 0mm 2mm, clip]{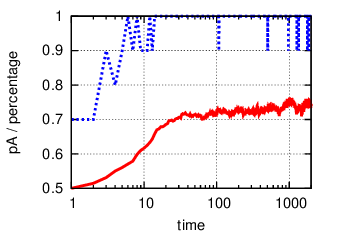}\\
~~~~~~~~(a1)&~~~~~~~~(b1)&~~~~~~~~(c1)\\
\hspace{4em}
\includegraphics[width=2in, trim = 1.2mm 0mm 0mm 2mm, clip]{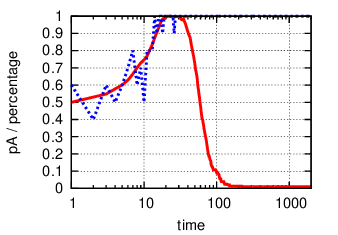}&
\includegraphics[width=2in, trim = 1.2mm 0mm 0mm 2mm, clip]{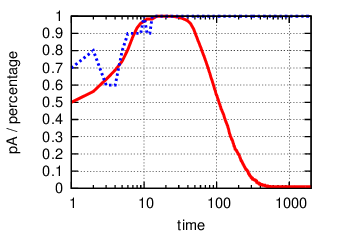}&
\includegraphics[width=2in, trim = 1.2mm 0mm 0mm 2mm, clip]{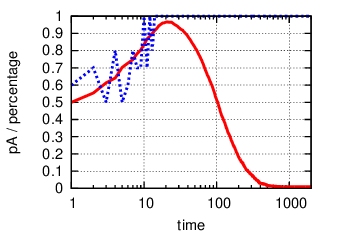}\\
~~~~~~~~(a2)&~~~~~~~~(b2)&~~~~~~~~(c2)\\
\hspace{4em}
\includegraphics[width=2in, trim = 1.2mm 0mm 0mm 2mm, clip]{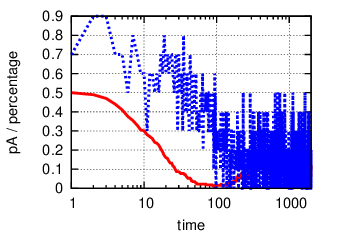}&
\includegraphics[width=2in, trim = 1.2mm 0mm 0mm 2mm, clip]{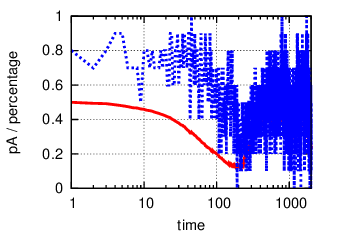}&
\includegraphics[width=2in, trim = 1.2mm 0mm 0mm 2mm, clip]{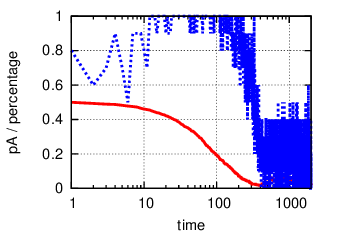}\\
~~~~~~~~(a3)&~~~~~~~~(b3)&~~~~~~~~(c3)\\
\hspace{4em}
\includegraphics[width=2in, trim = 1.2mm 0mm 0mm 2mm, clip]{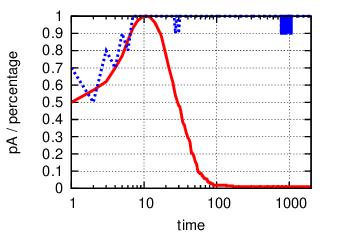}&
\includegraphics[width=2in, trim = 1.2mm 0mm 0mm 2mm, clip]{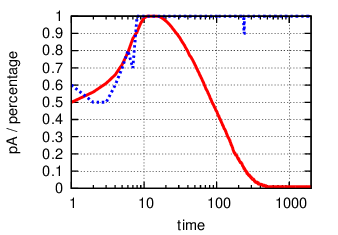}&
\includegraphics[width=2in, trim = 1.2mm 0mm 0mm 2mm, clip]{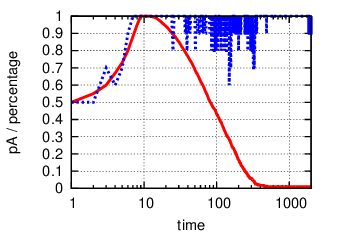}\\
~~~~~~~~(a4)&~~~~~~~~(b4)&~~~~~~~~(c4)\\
&\hspace{-18em}
\includegraphics[width=1.5in]{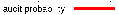}&
\hspace{-18em}
\includegraphics[width=2in]{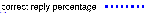}\\
\end{array}$
\end{minipage}
\caption{One covered worker. Parameters are $\Ct=0.1$, $a_i=0.1$, $\alpha=0.1$, and $\SW=0.1$ (uncovered workers) / $\SW=1$ (covered workers). Master's auditing probability, audit percentage and correct reply percentage as a function of time. First row: reputation \typeA , initial $p_C=0.5$. Second row: reputation \typeB , initial $p_C=1$.  Third row: reputation \typeC , initial $p_C=0.5$. Fourth row: reputation \typeD , initial $p_C=0.5$. First column $\tau=0.5$, $\Cp=0$. Second column, $\tau=0.1$, $\Cp=0$. Third column,  $\tau=0.1$, $\Cp=1$.}
\label{cov1_sat_type1_m}
\end{figure*}

In Figure~\ref{fig4A} the correct reply rate as a function of time is presented for the case where the master covers only one worker. In a time window of 2000 rounds, as observed only reputation \typeB is able to reach eventual correctness. Even in reputation \typeD where the system looks like converged it allows the master to receive an incorrect reply. In this scenario the master has a tolerance of $\tau=0.5$ and does not punish the workers $\Cp=0$. As we can see thought in other scenarios (see Figure~\ref{cov1_sat_type1_m}) where the tolerance is minimum and the master punishes the workers, the situation remains the same; the system reaches eventual correctness only when \typeB is used. In the case of \typeA in Figure~\ref{cov1_sat_type1_m}(b1) although the master always received the correct reply, the master must always audit with $p_\VRF$ close to 1. The reason why only reputation \typeB is able to reach eventual correctness is because it is the only reputation type that fulfils Property 2. Under this property the reputation of the leading group of honest workers will never be able to surpass the reputation of the rest of the workers. Thus uncovered workers that periodically cheat only when the master does not audit will not be able to have a greater reputation than the covered workers. Thus the system will be able to always receive the correct reply from the covered worker that has the highest reputation than the rest of the workers' aggregated reputations, while reducing the auditing probability to minimum.    

\begin{figure*}[htbp]
 \begin{minipage}{\textwidth}
\centering
$\begin{array}{ccc}
\hspace{4.5em}
\includegraphics[width=2in, trim = 1.2mm 0mm 0mm 2mm, clip]{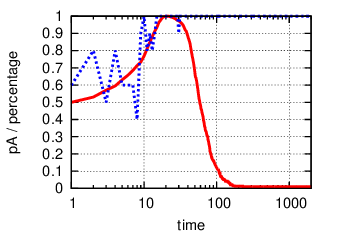}&
\includegraphics[width=2in, trim = 1.2mm 0mm 0mm 2mm, clip]{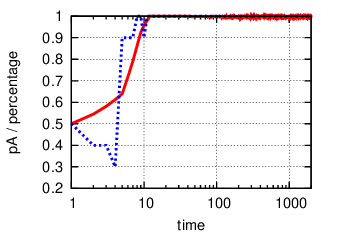}&
\includegraphics[width=2in, trim = 1.2mm 0mm 0mm 2mm, clip]{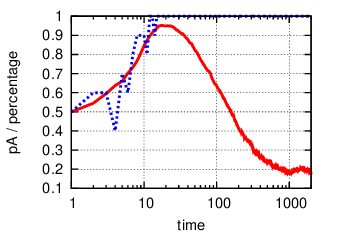}\\
~~~~~~~~(a1)&~~~~~~~~(b1)&~~~~~~~~(c1)\\
\hspace{4.5em}
\includegraphics[width=2in, trim = 1.2mm 0mm 0mm 2mm, clip]{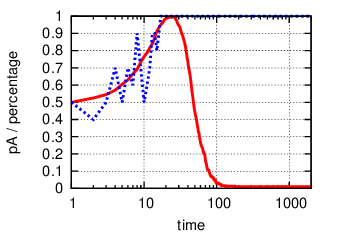}&
\includegraphics[width=2in, trim = 1.2mm 0mm 0mm 2mm, clip]{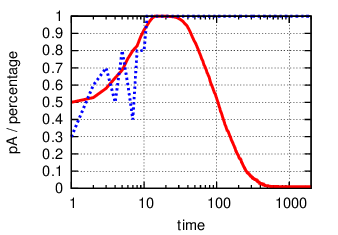}&
\includegraphics[width=2in, trim = 1.2mm 0mm 0mm 2mm, clip]{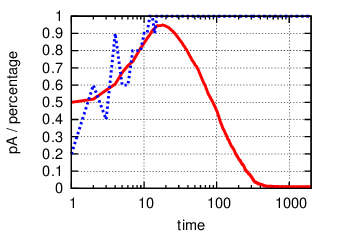}\\
~~~~~~~~(a2)&~~~~~~~~(b2)&~~~~~~~~(c2)\\
\hspace{4.5em}
\includegraphics[width=2in, trim = 1.2mm 0mm 0mm 2mm, clip]{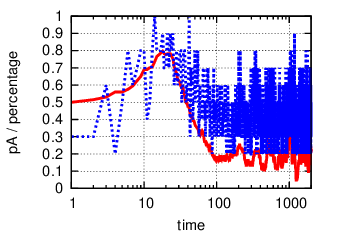}&
\includegraphics[width=2in, trim = 1.2mm 0mm 0mm 2mm, clip]{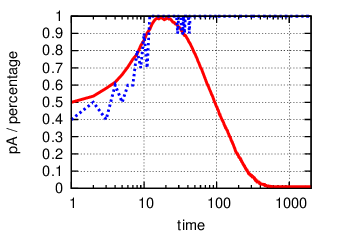}&
\includegraphics[width=2in, trim = 1.2mm 0mm 0mm 2mm, clip]{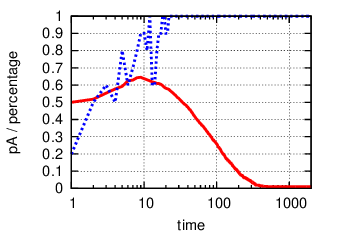}\\
~~~~~~~~(a3)&~~~~~~~~(b3)&~~~~~~~~(c3)\\
\hspace{4.5em}
\includegraphics[width=2in, trim = 1.2mm 0mm 0mm 2mm, clip]{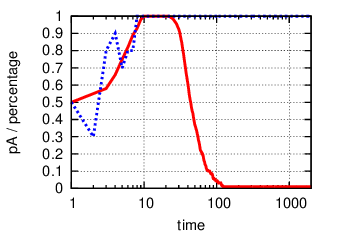}&
\includegraphics[width=2in, trim = 1.2mm 0mm 0mm 2mm, clip]{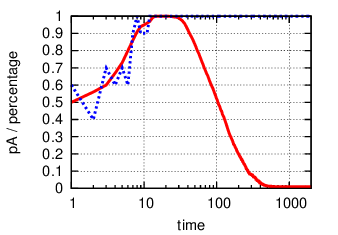}&
\includegraphics[width=2in, trim = 1.2mm 0mm 0mm 2mm, clip]{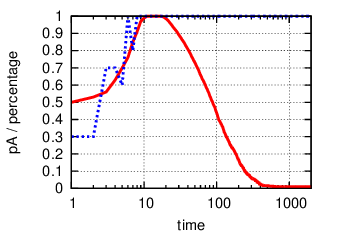}\\
~~~~~~~~(a4)&~~~~~~~~(b4)&~~~~~~~~(c4)\\
&\hspace{-18em}
\includegraphics[width=1.5in]{legends1}&
\hspace{-18em}
\includegraphics[width=2in]{legends2}\\
\end{array}$
\end{minipage}
\caption{Five covered workers. Parameters are $\Ct=0.1$, $a_i=0.1$, $\alpha=0.1$, initial $p_C=1$ and $\SW=0.1$ (uncovered workers) / $\SW=1$ (covered workers). Master's auditing probability, audit percentage and correct reply percentage as a function of time.
First row: reputation \typeA . Second row: reputation \typeB . Third row: reputation \typeC . Forth row: reputation \typeD . Left column: $\tau=0.5$, $\Cp=0$. Middle column: $\tau=0.1$, $\Cp=0$. Right column: $\tau=0.1$, $\Cp=1$.}
\label{cov5_alltype_m}
\end{figure*}

If the master covers the majority of workers then the system converges in most of the cases for different reputation types. In Figure~\ref{cov5_alltype_m} we can observe that reputation \typeB and \typeD converge in roughly the same amount of time in all the cases we examine. On the other hand reputation \typeA does not converge in the case where the tolerance is low and the master punishes (see Figure~\ref{cov5_alltype_m}(b1)). In this case the auditing probability is close to 1 meaning that this type of reputation was not sufficient to identify the covered rationals and form a trusted majority reputation forcing the master to audit almost at every round in order to obtain the correct reply. Also reputation \typeC was not able to converge although the auditing probability is less than a half the master does not always receive the correct task result. 

This lead us to conclude that is best for the master to use reputation \typeB in the case that the master can not cover the aspiration of more than one worker. If the master can cover move than the majority of workers then both reputation \typeB and \typeD are suitable. Comparing these results (see Figure~\ref{cov1_sat_type1_m}(a2) and Figure~\ref{cov5_alltype_m}(a2)\&(a4)) with the ones of Figure~\ref{pA-no-malicious}(b) we notice that the master does not need more auditing in the case of covering only a number of workers,what is perhaps counter-intuitive, thus our assumption was wrong. If the master is in a system where only rational workers exist then by using reputation \typeB she could guarantee eventual correctness by covering only one worker. Our analysis has indicated that a few bad cases exist where the system might actually not converge if not all the workers are covered even for \typeB . Although these are extreme cases as our simulations show, when a critical application exist that needs correctness of results such a risk can not be taken to reduce the costs of the master.

\begin{figure*}[htbp]
\begin{minipage}{\textwidth}
\centering
\subfloat[][]{\includegraphics[width=0.34\textwidth]{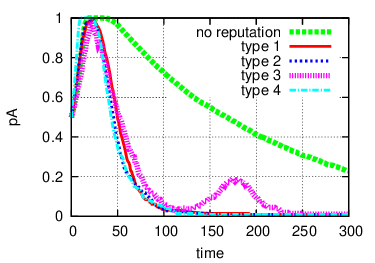}}
\subfloat[][]{\includegraphics[width=0.34\textwidth]{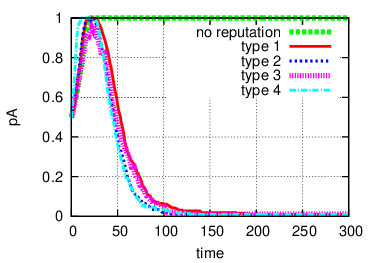}}
\subfloat[][]{\includegraphics[width=0.34\textwidth]{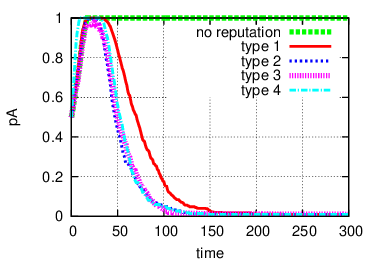}}
\end{minipage}
\caption{Master's auditing probability as a function of time in the presence of rational and malicious workers. Parameters in all plots, rationals' initial $p_C=1$, master's initial $p_\VRF=0.5$,$\SW=1$, $\Ct=0.1$, $\Cp=0$ and  $\alpha=0.1$, $a_i=0.1$. In (a) 4 malicious and 5 rationals, (b) 5 malicious and 4 rationals , (c) 8 malicious and 1 rational.}
\label{rep_mal_rat}
\end{figure*}

\begin{figure*}[htbp]
\begin{minipage}{\textwidth}
\centering
\subfloat[][]{\includegraphics[width=0.35\textwidth]{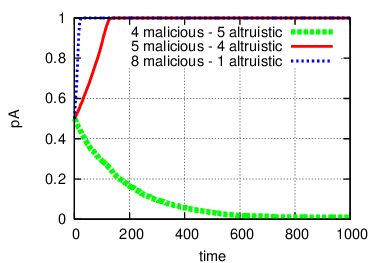}}
\subfloat[][]{\includegraphics[width=0.35\textwidth]{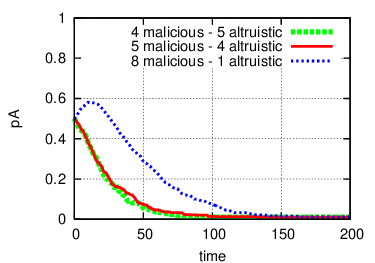}}
\end{minipage}
\begin{minipage}{\textwidth}
\centering
\subfloat[][]{\includegraphics[width=0.35\textwidth]{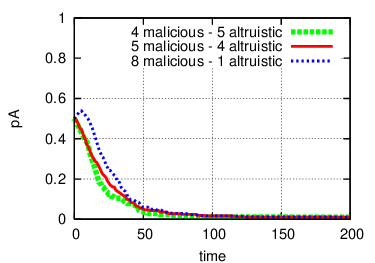}}
\subfloat[][]{\includegraphics[width=0.35\textwidth]{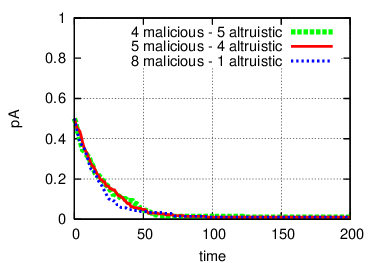}}
\subfloat[][]{\includegraphics[width=0.35\textwidth]{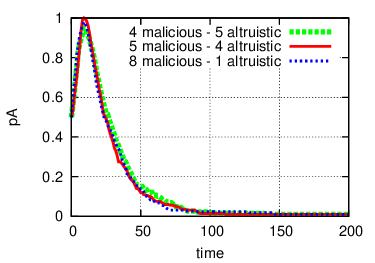}}
\end{minipage}
\caption{Master's auditing probability as a function of time in the presence of altruistic and malicious workers. Parameters in all plots, master's initial $p_\VRF=0.5$, $\Ct=0.1$, $\Cp=0$ and  $\alpha=0.1$, $a_i=0.1$. In (a) master does not use reputation, (b) master uses reputation \typeA, (c) master uses reputation \typeB , (d) master uses reputation \typeC, (e) master uses reputation \typeD .}
\label{rep_mal_alt}
\end{figure*}

\subsection*{\bf\em Different types of workers} \noindent 

Moving on, we evaluate our different reputation schemes in scenarios where malicious workers exist (this was the reason for introducing reputation at the first place). 
Figure~\ref{rep_mal_rat} shows results for the extreme case, with malicious workers, no altruistic workers, and rational workers that initially cheat with probability $p_{C}=1$. We observe that if the master does not use reputation and a majority of malicious workers exist, then the master is enforced by the mechanism to audit in every round. Even with a majority of rational workers, it takes a long time for the master to reach $p_\VRF^{min}$, if reputation is not used. Introducing reputation can indeed cope with the challenge of having a majority of malicious workers, except (obviously) when all workers are malicious.
 For \typeA, the larger the number of malicious workers, the slower the master reaches $p_\VRF^{min}$. On the contrary, the time to convergence to the $p_\VRF^{min}$ is independent of the number of malicious workers for reputation \typeB . This is due to the different dynamical behavior of the two reputations as discussed before. For reputation \typeC , if a majority of rationals exists then convergence is slower. This is counter-intuitive, but it is linked to the way reputation and error rate are calculated. On the other hand, with \typeC, $p_\VRF$ is slightly lower in the first rounds. As for reputation \typeD the convergence time and the behavior of the evolution of $p_\VRF$ is similar to reputation \typeB but in the initial rounds $p_\VRF=1$ for a larger period of time, providing an additional cost to the master. 
Given the above observations we can conclude that reputation \typeB has a slight advantage in all the scenarios considered (with and without a majority of rational workers) in therms of auditing cost to the master.  
  

\begin{figure*}[htbp]
 \begin{minipage}{\textwidth}
\centering
\subfloat[][]{\includegraphics[width=.42\textwidth] {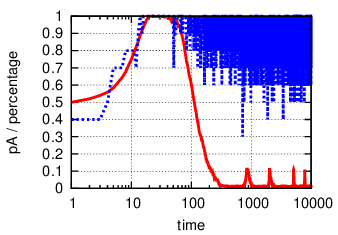}}
\qquad
\subfloat[][]{\includegraphics[width=.42\textwidth]{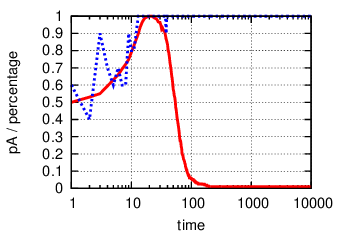} }
\end{minipage}
 \begin{minipage}{\textwidth}
\centering
\subfloat[][]{\includegraphics[width=.42\textwidth]{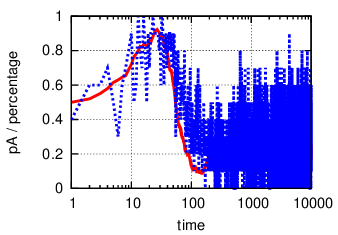} }
\qquad
\subfloat[][]{\includegraphics[width=.42\textwidth]{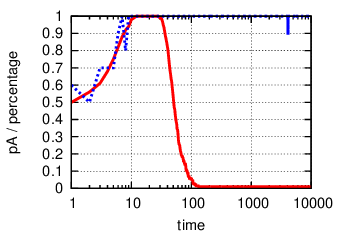} }
\end{minipage}
 \begin{minipage}{\textwidth}
\centering
\includegraphics[width=1.5in]{legends1}
\qquad
\includegraphics[width=2in]{legends2}
\end{minipage}
\caption{Presence of 4 malicious and 5 rational workers, only 1 rational worker is covered. Audit probability as function of time and correct reply percentage as a function of time. (a) Reputation \typeA . (b) Reputation \typeB . (c) Reputation \typeC . (d) Reputation \typeD . \vspace{-1em}}
\label{4mal_5rat_1covered}
\end{figure*}
 
We have checked the behavior of the system in the case where only malicious and altruistic workers exist in the system (see Figure~\ref{rep_mal_alt}). As expected, if the majority of the workers is malicious and the mechanism does not use a reputation scheme the system can not converge. For the first three reputation types the mechanism converge fast and efficiently (without increasing the initial auditing probability) for the case of 4 malicious and 5 altruistic and for the case of 5 malicious and 4 altruistic. Now for the case of 8 malicious and 1 altruistic the optimum result is given by reputation \typeC while reputation \typeB has comparably good results with a slight increment of the auditing probability in the fist rounds and convergence time around the same interval. Reputation \typeD as we can see gives the worst results with the auditing probability increasing to 1 before being able to decrease. The simulations where all three types of workers co-exist are omitted since adding altruistic workers in a system with malicious and rational workers only aids the convergence of the system without providing us with any useful inside.

In Figure~\ref{4mal_5rat_1covered}, we take a look at the case when the master decides to cover one worker out of 5 rationals and 4 malicious and that worker is rational. We notice that the system is performing in an analogous manner as in the case where all workers are covered. Only the mechanism that uses reputation \typeB is able to converge while when reputation \typeD is use the system performs quit well but still is unable to converge even after the 1000 round. 


\begin{figure}[htbp]
\centering
\includegraphics[width=.42\textwidth]{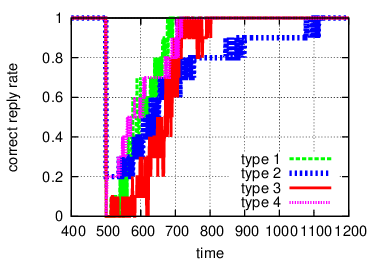}
\caption{Correct reply rate as a function of time. Presence of 5 malicious workers on the 500th round. Parameters are initial $p_\VRF=0.5$,$\SW=1$, $\Ct=0.1$, $\Cp=0$ and $\alpha=0.1$, $a_i=0.1$, $\tau=0.5$, initial $p_C=1$. }
\label{fig4B}
\end{figure}

\begin{figure*}[htbp]
$\begin{array}{ccc}
\hspace{8em}
\includegraphics[width=2.4in, trim = 1.2mm 0mm 2.5mm 2mm, clip]{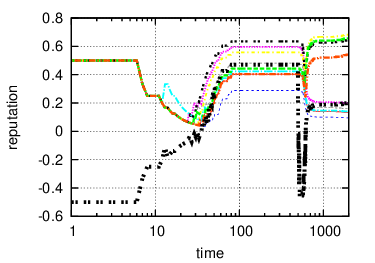}\vspace{-.2em}&
\includegraphics[width=2.4in, trim = 1.2mm 0mm 2.5mm 2mm, clip]{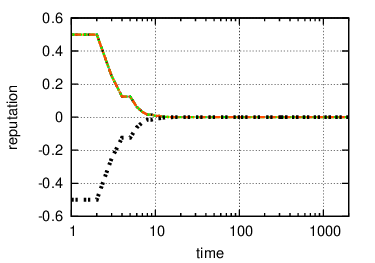}\vspace{-.2em}\\
~~~~~~~~~~~~(a)&~~~~~~~~~~(b)\\
\hspace{8em}
\includegraphics[width=2.4in, trim = 1.2mm 0mm 2.5mm 2mm, clip]{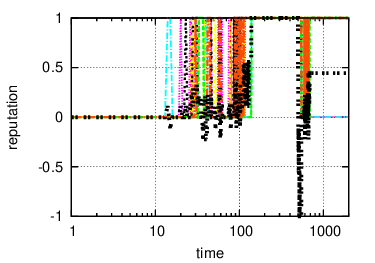}\vspace{-.2em}&
\includegraphics[width=2.4in, trim = 1.2mm 0mm 2.5mm 2mm, clip]{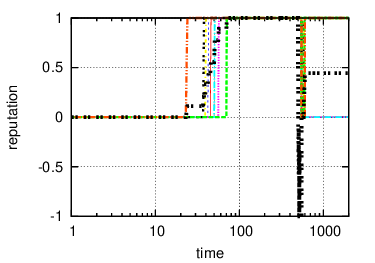}\vspace{-.1em}\\
~~~~~~~~~~~~(c)&~~~~~~~~~~(d)\\
\hspace{8em}
\includegraphics[width=1.5in]{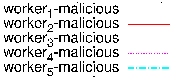}&
\hspace{8em}
\includegraphics[width=1.3in]{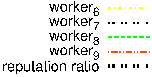}\\
\end{array}$
\caption{Presence of 5 malicious workers on the 500th round. Workers' reputation as a function of time, audit occurrences as a function of time and reputation ratio as a function of time, for an individual realization. Parameters in all panels, initial $p_C=1$, initial $p_\VRF=0.5$, $\Ct=0.1$, $\Cp=0$ and $\alpha=0.1$, $a_i=0.1$, $\tau=0.5$.
(a) Reputation \typeA , (b) reputation \typeB , (c) reputation \typeC and (d) reputation \typeD.   \vspace{-1em}}
\label{dynamic_change}
\end{figure*}

\subsection*{\bf\em Dynamic change of roles} \noindent
As a further check of the stability of our procedure, we now study the case when after correctness is reached some workers change their type, possibly due to a software or hardware error.  We simulate a situation in which  5 out of 9 rational workers suddenly change their behavior to malicious at time 500, a worst-case scenario. Figure~\ref{fig4B} shows that after the rational behavior of 5 workers turns to malicious,
convergence is reached again after a few hundred rounds and eventual correctness resumes. As we see from Figure~\ref{dynamic_change}, it takes more time for reputation \typeB to deal with the changes in the workers\rq{} behavior, 
because this reputation can never increase, and hence the system will reach eventual correctness only when the reputation of the workers that turned malicious becomes less than the reputation of the workers that stayed rational. It also takes more time for reputation \typeC to deal with the changes in the workers' behavior (see Figure~\ref{dynamic_change}).
In the case of reputation \typeA, not only the reputation of the workers that turned malicious decreases, but also the reputation of the workers that stayed rational increases. 
As for reputation \typeD it take a bit more time than reputation \typeA to reach eventual correctness again after the change of behavior. 

Therefore, reputation \typeA exhibits  better performance in dealing with dynamic changes of behavior than reputation types \typeB ,  \typeC and \typeD . As Figure~\ref{dynamic_change_master} depicts though the auditing probability of \typeA and \typeC is drastically increasing when the change of behavior happens while for reputation \typeB and \typeD this is not the case. 



\section{Conclusions and Future Work}
\label{sec:conclusions}
In this work we study a malicious-tolerant generic mechanism that uses reputation. We consider four reputation types, and  give provable guarantees that only reputation \typeB  (introduced in this work) provides eventual correctness. 
Simulations have shown that in the case of having all rational workers covered, eventual correctness is achieved by all four types. In the case of covering only one altruistic or rational worker, simulations have shown that only reputation \typeB can achieve eventual correctness. We show that reputation \typeB has more potential in commercial platforms where high reliability together with low auditing cost, rewarding few workers and fast convergence are required. We believe this advances the development of reliable commercial Internet-based Master-Worker Computing services. 
In particular, our simulations reveal interesting tradeoffs between reputation types and parameters and show that our mechanism is a generic one that can be adjusted to various settings. 

The analysis of the system is done assuming that workers have an implicit form of collusion. I.e., we assume that all misbehaving workers reply with the same answer and all workers behaving correctly give the same answer. Following~\cite{anta2015algorithmic},
we are now studying stronger models, in which workers collude in deciding when to cheat and when to be honest.
In a follow-up work we plan to investigate what happens if workers are connected to each other, forming a network (i.e, a social network through which they can communicate) or if malicious workers develop a more intelligent strategy against the system. Also the degree of trust among the players has to be considered and modeled in this scenario.
Additionally, we have assumed throughout this work that workers are responsive and willing to perform the task. In a follow up work, we plan to explore the case where workers might not be responsive.
 Another extension we are planning to study is to assume a platform with multiple masters. The goal in this case is to match workers and masters to maximize social efficiency, constrained to masters' and workers' preferences. Finally, we plan to extend our mechanism to deal also with tasks where more than one responses might be considered correct.

\begin{figure*}[htbp]
$
\begin{array}{ccc}
\hspace{7em}
\includegraphics[width=2.4in, trim = 1.2mm 0mm 2.5mm 2mm, clip]{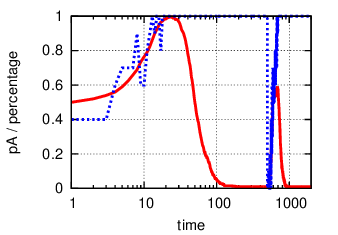}\vspace{-.2em}&
\includegraphics[width=2.4in, trim = 1.2mm 0mm 2.5mm 2mm, clip]{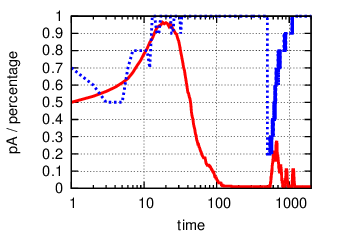}\vspace{-.2em}\\
~~~~~~~~~~~~(a)&~~~~~~~~~~(b)\\
\hspace{7em}
\includegraphics[width=2.4in, trim = 1.2mm 0mm 2.5mm 2mm, clip]{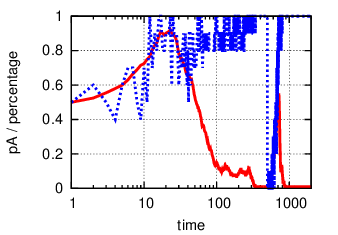}\vspace{-.2em}&
\includegraphics[width=2.4in, trim = 1.2mm 0mm 2.5mm 2mm, clip]{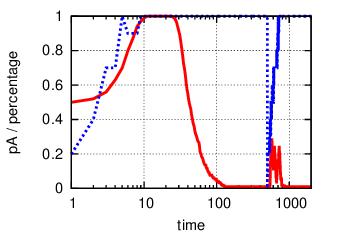}\vspace{-.2em}\\
~~~~~~~~~~~~(c)&~~~~~~~~~~(d)\\
\hspace{7em}
\includegraphics[width=1.5in]{legends1}&
\hspace{7em}
\includegraphics[width=2in]{legends2}\\
\end{array}$
\caption{Presence of 5 malicious workers on the 500th round. 
Audit probability as a function of time and correct reply percentage as a function of time. Parameters in all panels, initial $p_C=1$, initial $p_\VRF=0.5$, $\Ct=0.1$, $\Cp=0$ and $\alpha=0.1$, $a_i=0.1$, $\tau=0.5$.
(a) Reputation \typeA , (b) reputation \typeB and (c) reputation \typeC , (d) reputation \typeD .   \vspace{-1em}}
\label{dynamic_change_master}
\end{figure*}

\bibliographystyle{plain}
\bibliography{refs}

\begin{thebibliography}{10}

\bibitem{WCG}
Copyright IBM~Corporation 2015.
\newblock {World Community Grid}, 2015.
\newblock \url{https://secure.worldcommunitygrid.org/}.

\bibitem{Halp06}
Ittai Abraham, Danny Dolev, Rica Gonen, and Joe Halpern.
\newblock Distributed computing meets game theory: robust mechanisms for
  rational secret sharing and multiparty computation.
\newblock In {\em Proceedings of the twenty-fifth annual ACM symposium on
  Principles of Distributed Computing}, pages 53--62. ACM, 2006.

\bibitem{BAR}
Amitanand~S Aiyer, Lorenzo Alvisi, Allen Clement, Mike Dahlin, Jean-Philippe
  Martin, and Carl Porth.
\newblock Bar fault tolerance for cooperative services.
\newblock In {\em ACM SIGOPS Operating Systems Review}, volume~39, pages
  45--58. ACM, 2005.

\bibitem{einstein}
Bruce Allen.
\newblock {The Einstein@home project}, 2014.
\newblock \url{http://einstein.phys.uwm.edu}.

\bibitem{turk}
Amazon.com.
\newblock {Amazon's Mechanical Turk}, 2014.
\newblock \url{https://www.mturk.com}.

\bibitem{boinc}
David~P Anderson.
\newblock Boinc: A system for public-resource computing and storage.
\newblock In {\em Grid Computing, 2004. Proceedings. Fifth IEEE/ACM
  International Workshop on}, pages 4--10. IEEE, 2004.

\bibitem{volunteer}
David~P Anderson.
\newblock Volunteer computing: the ultimate cloud.
\newblock {\em ACM Crossroads}, 16(3):7--10, 2010.

\bibitem{boinc_reputation_legacy}
David~P Anderson.
\newblock {BOINC adaptive replication legacy}, 2014.
\newblock
  \url{https://boinc.berkeley.edu/trac/wiki/AdaptiveReplication?version=8}.

\bibitem{boinc_reputation}
David~P Anderson.
\newblock {BOINC adaptive replication}, 2016.
\newblock \url{http://boinc.berkeley.edu/trac/wiki/AdaptiveReplication}.

\bibitem{BOINC-WCG}
David~P Anderson and Kevin Reed.
\newblock Celebrating diversity in volunteer computing.
\newblock In {\em System Sciences, 2009. HICSS'09. 42nd Hawaii International
  Conference on}, pages 1--8. IEEE, 2009.

\bibitem{barber}
Simon Barber, Xavier Boyen, Elaine Shi, and Ersin Uzun.
\newblock Bitter to better—how to make bitcoin a better currency.
\newblock In {\em Financial Cryptography and Data Security}, pages 399--414.
  Springer, 2012.

\bibitem{bitcoinmining}
BitcoinMining.com.
\newblock {Bitcoin mining}, 2013.
\newblock \url{http://www.bitcoinmining.com}.

\bibitem{BM55}
Robert~R Bush and Frederick Mosteller.
\newblock Stochastic models for learning.
\newblock 1955.

\bibitem{CS_SODA07}
Steve Chien and Alistair Sinclair.
\newblock Convergence to approximate nash equilibria in congestion games.
\newblock In {\em Proceedings of the eighteenth annual ACM-SIAM symposium on
  Discrete algorithms}, pages 169--178. Society for Industrial and Applied
  Mathematics, 2007.

\bibitem{IEEETC14}
Evgenia Christoforou, Antonio {Fern{\'a}ndez Anta}, Chryssis Georgiou, and
  Miguel~A. Mosteiro.
\newblock Algorithmic mechanisms for reliable master-worker internet-based
  computing.
\newblock {\em IEEE Trans. Computers}, 63(1):179--195, 2014.

\bibitem{CCPE13}
Evgenia Christoforou, Antonio {Fern{\'a}ndez Anta}, Chryssis Georgiou, Miguel~A
  Mosteiro, and Angel S{\'a}nchez.
\newblock Applying the dynamics of evolution to achieve reliability in
  master--worker computing.
\newblock {\em Concurrency and Computation: Practice and Experience},
  25(17):2363--2380, 2013.

\bibitem{2011galaxy}
Daniel Clery.
\newblock Galaxy zoo volunteers share pain and glory of research.
\newblock {\em Science}, 333(6039):173--175, 2011.

\bibitem{eickhoff2013increasing}
Carsten Eickhoff and Arjen~P de~Vries.
\newblock Increasing cheat robustness of crowdsourcing tasks.
\newblock {\em Information retrieval}, 16(2):121--137, 2013.

\bibitem{emBoinc}
Trilce Estrada, Michela Taufer, and David~P Anderson.
\newblock Performance prediction and analysis of boinc projects: An empirical
  study with emboinc.
\newblock {\em Journal of Grid Computing}, 7(4):537--554, 2009.

\bibitem{PPL12}
Antonio {Fern{\'a}ndez Anta}, Chryssis Georgiou, Luis L{\'o}pez, and Agustin
  Santos.
\newblock Reliable internet-based master-worker computing in the presence of
  malicious workers.
\newblock {\em Parallel Processing Letters}, 22(01), 2012.

\bibitem{NCA08}
Antonio {Fern{\'a}ndez Anta}, Chryssis Georgiou, and Miguel~A Mosteiro.
\newblock Designing mechanisms for reliable internet-based computing.
\newblock In {\em Network Computing and Applications, 2008. NCA'08. Seventh
  IEEE International Symposium on}, pages 315--324. IEEE, 2008.

\bibitem{anta2015algorithmic}
Antonio {Fern{\'a}ndez Anta}, Chryssis Georgiou, Miguel~A Mosteiro, and Daniel
  Pareja.
\newblock Algorithmic mechanisms for reliable crowdsourcing computation under
  collusion.
\newblock {\em PloS one}, 10(3), {2015}.

\bibitem{galaxyzoo}
GalaxyZoo.
\newblock {Galaxy Zoo Websit}, 2016.
\newblock \url{https://www.galaxyzoo.org/}.

\bibitem{UDC}
Philippe Golle and Ilya Mironov.
\newblock Uncheatable distributed computations.
\newblock In {\em Topics in Cryptology -- CT-RSA 2001}, pages 425--440.
  Springer, 2001.

\bibitem{Heienetal09}
Eric~Martin Heien, David~P Anderson, and Kenichi Hagihara.
\newblock Computing low latency batches with unreliable workers in volunteer
  computing environments.
\newblock {\em Journal of Grid Computing}, 7(4):501--518, 2009.

\bibitem{howe2006rise}
Jeff Howe.
\newblock The rise of crowdsourcing.
\newblock {\em Wired magazine}, 14(6):1--4, 2006.

\bibitem{hyman2013software}
Paul Hyman.
\newblock Software aims to ensure fairness in crowdsourcing projects.
\newblock {\em Commun. ACM}, 56(8):19--21, 2013.

\bibitem{survey07}
Audun J{\o}sang, Roslan Ismail, and Colin Boyd.
\newblock A survey of trust and reputation systems for online service
  provision.
\newblock {\em Decision support systems}, 43(2):618--644, 2007.

\bibitem{kittur2013future}
Aniket Kittur, Jeffrey~V Nickerson, Michael Bernstein, Elizabeth Gerber, Aaron
  Shaw, John Zimmerman, Matt Lease, and John Horton.
\newblock The future of crowd work.
\newblock In {\em Proceedings of the 2013 conference on Computer supported
  cooperative work}, pages 1301--1318. ACM, 2013.

\bibitem{kloetzer2014learning}
Laure Kloetzer, Daniel Schneider, Charlene Jennett, Ioanna Iacovides, Alexandra
  Eveleigh, Anna Cox, and Margaret Gold.
\newblock Learning by volunteer computing, thinking and gaming: What and how
  are volunteers learning by participating in virtual citizen science?
\newblock {\em Changing Configurations of Adult Education in Transitional
  Times}, page~73, 2014.

\bibitem{Kondoetal2007}
Derrick Kondo, Filipe Araujo, Paul Malecot, Patricio Domingues, Luis~Moura
  Silva, Gilles Fedak, and Franck Cappello.
\newblock Characterizing result errors in internet desktop grids.
\newblock In {\em Euro-Par 2007 Parallel Processing}, pages 361--371. Springer,
  2007.

\bibitem{ALEX}
Kishori~M Konwar, Sanguthevar Rajasekaran, and Alexander~A Shvartsman.
\newblock Robust network supercomputing with malicious processes.
\newblock In {\em Distributed Computing}, pages 474--488. Springer, 2006.

\bibitem{SETI}
Eric Korpela, Dan Werthimer, David~P Anderson, Jeff Cobb, and Matt Lebofsky.
\newblock Seti@home -- massively distributed computing for seti.
\newblock {\em Computing in science \& engineering}, 3(1):78--83, 2001.

\bibitem{lazaro2012long}
Daniel L{\'a}zaro, Derrick Kondo, and Joan~Manuel Marqu{\`e}s.
\newblock Long-term availability prediction for groups of volunteer resources.
\newblock {\em Journal of Parallel and Distributed Computing}, 72(2):281--296,
  2012.

\bibitem{BAR-FP}
Harry~C Li, Allen Clement, Mirco Marchetti, Manos Kapritsos, Luke Robison,
  Lorenzo Alvisi, and Mike Dahlin.
\newblock Flightpath: Obedience vs. choice in cooperative services.
\newblock In {\em OSDI}, volume~8, pages 355--368, 2008.

\bibitem{BAR-gossip}
Harry~C Li, Allen Clement, Edmund~L Wong, Jeff Napper, Indrajit Roy, Lorenzo
  Alvisi, and Michael Dahlin.
\newblock Bar gossip.
\newblock In {\em Proceedings of the 7th symposium on Operating systems design
  and implementation}, pages 191--204. USENIX Association, 2006.

\bibitem{microworkers}
Microworkers.com.
\newblock {microWorkers, work \& offer a micro job}, 2014.
\newblock \url{https://microworkers.com/}.

\bibitem{nov2010volunteer}
Oded Nov, David Anderson, and Ofer Arazy.
\newblock Volunteer computing: a model of the factors determining contribution
  to community-based scientific research.
\newblock In {\em Proceedings of the 19th international conference on World
  wide web}, pages 741--750. ACM, 2010.

\bibitem{Sarmenta02}
Luis~FG Sarmenta.
\newblock Sabotage-tolerance mechanisms for volunteer computing systems.
\newblock {\em Future Generation Computer Systems}, 18(4):561--572, 2002.

\bibitem{boinc_survey}
SETI@home.
\newblock {SETI@home Poll Results}, 2000.
\newblock \url{http://boinc.berkeley.edu/slides/xerox/polls.html}.

\bibitem{rational}
Jeffrey Shneidman and David~C Parkes.
\newblock Rationality and self-interest in peer to peer networks.
\newblock In {\em Peer-to-Peer Systems II}, pages 139--148. Springer, 2003.

\bibitem{silberman2010ethics}
M~Six Silberman, Lilly Irani, and Joel Ross.
\newblock Ethics and tactics of professional crowdwork.
\newblock {\em XRDS: Crossroads, The ACM Magazine for Students}, 17(2):39--43,
  2010.

\bibitem{MSJ82}
John~Maynard Smith.
\newblock {\em Evolution and the Theory of Games}.
\newblock Cambridge university press, 1982.

\bibitem{sonnek07}
Jason Sonnek, Abhishek Chandra, and Jon~B Weissman.
\newblock Adaptive reputation-based scheduling on unreliable distributed
  infrastructures.
\newblock {\em Parallel and Distributed Systems, IEEE Transactions on},
  18(11):1551--1564, 2007.

\bibitem{RLbook}
Csaba Szepesv{\'a}ri.
\newblock Algorithms for reinforcement learning.
\newblock {\em Synthesis Lectures on Artificial Intelligence and Machine
  Learning}, 4(1):1--103, 2010.

\bibitem{TACB05}
Michela Taufer, David~P Anderson, Pietro Cicotti, and Charles~L Brooks~III.
\newblock Homogeneous redundancy: a technique to ensure integrity of molecular
  simulation results using public computing.
\newblock In {\em Parallel and Distributed Processing Symposium, 2005.
  Proceedings. 19th IEEE International}, pages 119a--119a. IEEE, 2005.

\bibitem{BARTransfer}
Xavier Vila{\c{c}}a, Oksana Denysyuk, and Lu{\'\i}s Rodrigues.
\newblock Asynchrony and collusion in the n-party bar transfer problem.
\newblock In {\em Structural Information and Communication Complexity}, pages
  183--194. Springer, 2012.

\bibitem{von2006games}
Luis Von~Ahn.
\newblock Games with a purpose.
\newblock {\em Computer}, 39(6):92--94, 2006.

\bibitem{yuen2011survey}
Man-Ching Yuen, Irwin King, and Kwong-Sak Leung.
\newblock A survey of crowdsourcing systems.
\newblock In {\em Privacy, security, risk and trust (passat), 2011 ieee third
  international conference on and 2011 ieee third international conference on
  social computing (socialcom)}, pages 766--773. IEEE, 2011.

\bibitem{CCS}
Matthew Yurkewych, Brian~N Levine, and Arnold~L Rosenberg.
\newblock On the cost-ineffectiveness of redundancy in commercial {P2P}
  computing.
\newblock In {\em Proceedings of the 12th ACM conference on Computer and
  communications security}, pages 280--288. ACM, 2005.

\bibitem{zhang2012reputation}
Yu~Zhang and Mihaela van~der Schaar.
\newblock Reputation-based incentive protocols in crowdsourcing applications.
\newblock In {\em INFOCOM, 2012 Proceedings IEEE}, pages 2140--2148. IEEE,
  2012.

\end{thebibliography}

%
%

\end{document}